\RequirePackage[dvipsnames]{xcolor} 

\documentclass{easychair}

\usepackage[T1]{fontenc}
\usepackage[utf8]{inputenc}

\usepackage[english]{babel}

\usepackage{mathtools,amssymb,amsfonts,thmtools}
\usepackage{centernot}
\usepackage{booktabs}
\usepackage[all]{nowidow}
\usepackage{mathpartir}
\usepackage{stmaryrd}
\usepackage{listings}
\lstset{
    columns=fullflexible,
    showspaces=false,
    showtabs=false,
    breaklines=true,
    showstringspaces=false,
    breakatwhitespace=true,
    escapeinside={(*@}{@*)},
    keywordstyle=\color{blue},
    stringstyle=\color{red},
    numberstyle=\color{gray},
    basicstyle=\ttfamily\footnotesize,
    framesep=15pt,
    xleftmargin=12pt,
    tabsize=4,
    captionpos=b,
    mathescape,
    language = Pascal,
    morekeywords={switch},
    numbers=left,
    stepnumber=1
}

\usepackage[draft,nomargin,inline,index]{fixme}
\fxusetheme{color}
\FXRegisterAuthor{l}{el}{\color{green} {\underline{lillo}}}
\FXRegisterAuthor{p}{ep}{\color{red} {\underline{pd}}}
\FXRegisterAuthor{m}{em}{\color{blue} {\underline{matteo}}}

\hypersetup{breaklinks=true}
\usepackage{url}
\usepackage{breakurl}
\usepackage{common}
\usepackage{seccomp}

\allowdisplaybreaks

\title{\mbox{Control-flow Flattening Preserves the Constant-Time Policy}%
\thanks{The first two authors have been partially supported by project PRA\_2018\_66
\emph{DECL\-ware:\ Declarative methodologies for designing and deploying applications} of the  Universit\`a di Pisa;
the third author by MIUR project PRIN 2017FTXR7S
\emph{IT MATTERS} (Methods and Tools for Trustworthy Smart Systems).}\\[-8pt]
{\small Extended Version}\vspace{-3mm}}
\titlerunning{Control-flow Flattening Preserves the Constant-Time Policy}


\author{ Matteo Busi\inst{1} \and Pierpaolo Degano\inst{1} \and Letterio
  Galletta\inst{2} \vspace{-1mm}}
\authorrunning{M.~Busi, P.~Degano and L.~Galletta}

\institute{
  Università di Pisa, Pisa, Italy ---
  \email{\{matteo.busi, degano\}@di.unipi.it}
\and
   IMT School for Advanced Studies, Lucca, Italy ---
   \email{letterio.galletta@imtlucca.it}
}

\begin{document}
\maketitle
\begin{abstract}
  Obfuscating compilers protect a software by obscuring its meaning and impeding
  the reconstruction of its original source code.
  The typical concern when defining such compilers is their robustness against
  reverse engineering and the performance of the produced code.
  Little work has been done in studying whether the security properties of a
  program are preserved under obfuscation.
  In this paper we start addressing this problem: we consider control-flow
    flattening, a popular obfuscation technique used in industrial compilers,
  and a specific security policy, namely constant-time.
  We prove that this
  obfuscation preserves the policy, i.e., that every program satisfying the
  policy still does after the transformation.
\end{abstract}

\section{Introduction}\label{sec:intro}

Secure compilation is an emerging research field that puts together techniques
  from security, programming languages, formal verification, and hardware
  architectures to devise compilation chains that protect various aspects of
  software and eliminate security vulnerabilities~\cite{sigplanblog,busi2019brief}.
As any other compiler, a secure one not only translates a \emph{source program},
written in a \emph{high-level language}, into an efficient \emph{object} code
(low-level), but also provides mitigations that make exploiting security
vulnerabilities more difficult and that limit the damage of an attack.
Moreover, a secure compilation chain deploys mechanisms to enforce secure
interoperability between code written in safe and unsafe languages, and
makes it hard extracting confidential data from information gained by examining
program runs (e.g., soft information as specific outputs given certain inputs,
or physical one as power consumption or execution time).

An important requirement for making a compiler secure is that it must grant that the security properties at the source level are fully preserved into the object level or, equivalently, that all the attacks that can be carried out at the object level can also be carried out at the source level.
In this way, it is enough showing that the program is secure at the source level, where reasoning is far more comfortable than at low level!

In this paper we focus on obfuscating compilers designed to protect a software
by obscuring its meaning and impeding the reconstruction of its original source code.
Usually, the main concern when defining such compilers is their robustness against
reverse engineering and the performance of the produced code.
Very few papers in the literature address the problem of proving their correctness, e.g.,~\cite{blazy2016formal}, and, to the best our knowledge, there is no
paper about the preservation of security policies.
Here, we offer a first contribution in this direction:
we consider a popular
program obfuscation (namely \emph{control-flow flattening}~\cite{laszlo2009obfuscating}) and a specific
security policy (namely \emph{constant-time}),
and we prove that every program satisfying the policy still does
after the transformation, i.e., the obfuscation preservers the policy.

For the sake of presentation, our source language is rather essential, as well as our illustrative examples.
The proof that control-flow flattening is indeed secure follows the approach
of~\cite{barthe2018secure} (briefly presented in Section~\ref{sec:background}), and only needs paper-and-pencil on our neat, foundational setting.
  Intuitively, we prove that if two executions of a program on different secret
  values are indistinguishable (i.e., they take the same time), then also the
  executions of its obfuscated version are indistinguishable (Section~\ref{sec:formalization}).

Actually, we claim that extending our results to a richer language will only
require to handle more details with no relevant changes in the structure of the
proof itself;
similarly, other security properties can be accommodated with no particular
effort in this framework, besides those already studied in~\cite{barthe2018secure}, and also
other program transformations can be proved to preserve security in the same
manner.
%
%

Below, we present the security policy and the transformation of interest.

\paragraph{Constant-time policy}
An intruder can extract confidential data by observing
the physical behavior of a system, through the so-called side-channel attacks.
The idea is that the attacker can recover some pieces of confidential information or can get indications on which parts are worth her cracking efforts, by measuring some physical quantity about the execution, e.g., power consumption and time.
Many of these attacks, called \emph{timing-based} attacks, exploit the execution time of programs~\cite{Kocher96}.
%
For example, if the program branches on a secret, the attacker may restrict the set of values it may assume, whenever the two branches have different execution times and the attacker can measure and compare them.
%
%
A toy example follows (in a sugared syntax),
where a user digits her pin then checked against the
stored one character by character: here the policy is violated since checking a correct pin takes longer than a wrong one.
%
%
%
\begin{lstlisting}
pin := read_secret();
current_char := 1;
while (current_char $\wleq$ stored_pin_length and pin(current_char) = stored_pin(current_char))
    current_char := current_char+1;

if (current_char = stored_pin_length+1) then print("OK!");
    else print("KO!");
    \end{lstlisting}
Many mitigations of timing-based attacks have been proposed, both hardware and software.
The \emph{program counter}~\cite{MolnarPSW05} and the \emph{constant-time}~\cite{Bernstein05} policies are software-based countermeasures,
giving rise to the \emph{constant-time programming} discipline.
It makes programs \emph{constant-time} w.r.t. secrets, i.e., the running times of programs is independent of secrets.
%
%
The requirement to achieve is that
neither the control-flow of programs nor the
sequence of memory accesses depend on secrets, e.g., the value of \texttt{pin} in our example.
Usually, this is formalized as a form of an information flow
policy~\cite{GM82} w.r.t.~an instrumented semantics that records information
leakage.
Intuitively, this policy requires that two executions started in equivalent
states (from an attacker’s point of view) yield equivalent leakage, making them \emph{indistinguishable} to an attacker.

The following is a constant-time version of the above program that checks if a pin is correct:
%
%
%
\begin{lstlisting}
pin := read_secret();
current_char := 1;
pin_ok := true
while (current_char $\wleq$ stored_pin_length)
    current_char := current_char+1
    if (pin(current_char) = stored_pin(current_char)) then pin_ok := pin_ok
         else pin_ok := false

if (pin_ok = true) then print("OK!");
    else print("KO!");
\end{lstlisting}

\paragraph{Control-flow flattening}
A different securing technique is code obfuscation, a program transformation that aims at hiding
the intention and the logic of programs by obscuring (portions of) source or object code.
It is used to protect a software making it more difficult to reverse engineer the (source/binary)
code of the program, to which the attacker can access.
%
%
%
%
In the literature different obfuscations have been proposed.
They range from only performing simple syntactic transformations, e.g., renaming variables
and functions, to more sophisticated ones that alter both the data, e.g., constant
encoding and array splitting~\cite{collberg2010surreptitious}, and the control flow of the
program, e.g., using opaque predicates~\cite{collberg2010surreptitious} and inserting dead code.

Control-flow flattening is an advanced obfuscation technique, implemented in state-of-the-art and industrial compilers, e.g.,~\cite{junod2015obfuscator}.
%
%
%
Intuitively, this transformation re-organizes the Control Flog Graph (CFG) of a program by taking its
basic blocks and putting them as cases of a selective structure that dispatches to the right case.
%
In practice, CFG flattening breaks each sequences of statements, nesting of
loops and if-statements into single statements, and then hides them in the cases
of a large \wswitch statement, in turn wrapped inside a \wwhile loop.
In this way, statements originally at different nesting level are now put next each
other.
Finally, to ensure that the control flow of the program during the execution is
the same as before, a new variable $\mtt{pc}$ is introduced that acts as a
program counter, and is also used to terminate the \wwhile loop.
The \wswitch statement dispatches the execution to one of its cases depending on
the value of $\mtt{pc}$.
When the execution of a case of the \wswitch statement is about to complete
$\mtt{pc}$ is updated with the value of the next statement to executed.

The obfuscated version of our constant-time example follows.
%
\begin{lstlisting}
$\mtt{pc}$ := 1;
while(1 $\leq \mtt{pc}$)
     switch($\mtt{pc}$):
          case 1:    pin := read_secret(); $\mtt{pc}$:= 2;
          case 2:    current_char := 1; $\mtt{pc}$:= 3;
          case 3:    pin_ok := true; $\mtt{pc}$:= 4;
          case 4:    if (current_char $\wleq$ stored_pin_length) then $\mtt{pc}$:= 5; else $\mtt{pc}$ := 9;
          case 5:    current_char := current_char+1; $\mtt{pc}$:= 6;
          case 6:    if (pin(current_char) = stored_pin(current_char)) then $\mtt{pc}$:= 7; else $\mtt{pc}$ := 8;
          case 7:    pin_ok := pin_ok; $\mtt{pc}$:= 4;
          case 8:    pin_ok := false; $\mtt{pc}$:= 4;
          case 9:    $\wskip$; $\mtt{pc}$:= 10;
          case 10: if (pin_ok = true) then $\mtt{pc}$:= 11; else $\mtt{pc}$ := 12;
          case 11: print("OK!"); $\mtt{pc}$:= 0;
          case 12: print("KO!"); $\mtt{pc}$:= 0;
\end{lstlisting}


  Now the point is whether the new obfuscated program is still constant-time, which is the case.
  In general we would like to have guarantees that the attacks prevented by the
  constant-time based countermeasure are not possible in the obfuscated
  versions.




\section{Background: CT-simulations}\label{sec:background}

Typically, for proving the correctness of a compiler one introduces a simulation relation between the computations at the source and at the target level: if such a relation exists, we have the guarantee that the source
program and the target program have the same observable behavior, i.e., the
same set of traces.

A general method for proving that constant-time is also preserved by
compilation generalizes this approach and is based on the notion of CT-simulation~\cite{barthe2018secure}.
It considers three relations: a simulation
relation between source and target, and two equivalences, one between source
and the other between target computations.
The idea is to prove that, given two computations at source level that are
equivalent, they are simulated by two equivalent computations at the target level.
Actually, CT-simulations guarantee the preservation of a
particular form of non-interference, called \emph{observational non-interference}.
In the rest of this section, we briefly survey observational
non-interference and how CT-simulations preserve it.

The idea is to model the behavior of programs using a
labeled transition system of the form $\step{\src{A}}{t}{\src{B}}$ where $\src{A}$ and $\src{B}$
are program configurations and $t$ represents the leakage associated with the
execution step between $\src{A}$ and $\src{B}$.
The semantics is assumed deterministic.
Hereafter, let the configurations of the source programs be
ranged over by $\src{A}, \src{B}, \ldots$ and those of the target programs be ranged over by $\trg{\alpha}, \trg{\beta}, \ldots$.
We will use the dot notation to refer to commands and state inside configurations, e.g., $\src{A.\mi{cmd}}$ refers to the command part of the configuration $\src{A}$.%
\footnote{Following the convention of secure compilation, we write in a \src{blue, sans\text{-}serif} font the elements of the source language, in a \trg{red, bold} one those of the target and in black those that are in common.}

The leakage represents what the attacker learns by the program execution.
Formally, the leakage is a list of atomic leakages where not
cancellable.
Observational non-interference is defined for complete executions (we denote
$S_f$ the set of final configurations) and w.r.t.\ an equivalence relation $\phi$ on configurations (e.g., states are
equivalent on public variables):
\begin{definition}[Observational non-interference~\cite{barthe2018secure}]
  A program $p$ is observationally non-interferent w.r.t.~a relation $\phi$,
  written $p \models \mathit{ONI}(\phi)$, iff
  for all initial configurations $A, A' \in S_i$ and configurations $B, B'$ and leakages $t,t'$ and $n \in
  \mathbb{N}$,
  \[
    \stepn{A}{t}{n}{B} \land \stepn{A'}{t'}{n}{B'} \land \phi(A, A') \implies
    t = t' \land (B \in S_f \text{ iff } B' \in S_f).
  \]
\end{definition}

Hereafter, we denote a compiler/transformation with $\comp{\cdot}$ and with $\comp{p}$
the result of compiling a program $\src{p}$.
Intuitively, a compiler $\comp{\cdot}$ preserves observational non-interference when for every
program $\src{p}$ that enjoys the property, $\comp{p}$ does as well.
Formally,
\begin{definition}[Secure compiler]\label{def:securecomp}
A transformation $\comp{\cdot}$ preserves observational non-interference \emph{iff}, for all programs $\src{p}$
\[
    \src{p} \models \mathit{ONI}(\phi) \Rightarrow \comp{p} \models \mathit{ONI}(\phi).
\]
\end{definition}
%
To show that a compiler $\comp{\cdot}$ is secure, we follow~\cite{barthe2018secure}, and build a \emph{general CT-simulation} in two steps.
First we define a simulation, called \emph{general simulation}, that relates computations between source and target languages.
The idea is to consider related a source and a target configuration whenever, after they perform a certain number of steps, they end up in two still related configurations.
Formally,
\begin{definition}[General simulation~\cite{barthe2018secure}]~\label{def:manystepssim}
    Let $\ns{\cdot}{\cdot}$ be a function mapping source and target configurations to $\mb{N}$.
    Also, let $\sz{\cdot}$ be a function from source configurations to $\mb{N}$.
    The relation $\confrel{}{\src{p}}{}$ is a \emph{general simulation} w.r.t.\ $\ns{\cdot}{\cdot}$ whenever:
    \begin{enumerate}
        \item $(\forall \src{B}, \trg{\alpha}.\ \step{\src{A}}{}{\src{B}} \land \confrel{\src{A}}{\src{p}}{\trg{\alpha}} \implies (\exists \trg{\beta}.\ \stepn{\trg{\alpha}}{}{\ns{\src{A}}{\trg{\alpha}}}{\trg{\beta}} \implies \confrel{\src{B}}{\src{p}}{\trg{\beta}})$,
        \item $(\forall \src{B}, \trg{\alpha}.\ \step{\src{A}}{}{\src{B}} \land \confrel{\src{A}}{\src{p}}{\trg{\alpha}} \land \ns{\src{A}}{\trg{\alpha}} = 0 \implies \sz{B} < \sz{A}$ \label{num:2},
        \item For any source configuration $\src{B} \in S_f$ and target configuration $\trg{\alpha}$ there exists a target configuration $\trg{\beta} \in S_f$ such that $\stepn{\trg{\alpha}}{}{\ns{\src{A}}{\trg{\alpha}}}{\trg{\beta}} \implies \confrel{\src{A}}{\src{p}}{\trg{\beta}}$.
    \end{enumerate}
\end{definition}
\noindent
Given two configurations $\src{A}$ and $\trg{\alpha}$ in the simulation relation, the function $\ns{A}{\alpha}$ predicts how many steps $\trg{\alpha}$ has to perform for reaching a target configuration $\trg{\beta}$ related with the corresponding source configuration $\src{B}$.
When $\ns{a}{\alpha} = 0$, a possibly infinite
sequence of source steps is simulated by an empty one at the target level.
To avoid these situations the measure function $\sz{\cdot}$ is introduced and the condition~\ref{num:2} of the above definition ensures that the measure of source configuration strictly decreases whenever 
\mbox{the corresponding target one stutters.}

The second step consists of introducing two equivalence relations between configurations: $\eqc_s$ relates configurations at the source and $\eqc_t$ at the target.
These two relations and the simulation relation form a \emph{general CT-simulation}.
Formally,
\begin{definition}[General CT-simulation~\cite{barthe2018secure}]
    A pair $(\eqc_s, \eqc_t)$ is a \emph{general CT-simulation} w.r.t.\ $\confrel{}{p}{}$, $\ns{\cdot}{\cdot}$ and $\sz{\cdot}$ whenever:
    \begin{enumerate}
        \item $(\eqc_s, \eqc_t)$ is a \emph{manysteps CT-diagram}, i.e., if
        \begin{itemize}
            \item $\src{A} \eqc_s \src{A'}$ and $\trg{\alpha} \eqc_t \trg{\alpha'}$;
            \item $\step{\src{A}}{t}{\src{B}}$ and $\step{\src{A'}}{t}{\src{B'}}$;
            \item $\stepn{\trg{\alpha}}{\tau}{\ns{A}{\alpha}}{\trg{\beta}}$ and $\stepn{\trg{\alpha'}}{\tau'}{\ns{A'}{\alpha'}}{\trg{\beta'}}$;
            \item $\confrel{A}{p}{\alpha}$, $\confrel{A'}{p}{\alpha'}$, $\confrel{B}{p}{\beta}$ and  $\confrel{B'}{p}{\beta'}$
        \end{itemize}
        then
        \begin{itemize}
            \item $\tau = \tau'$ and $\ns{A}{\alpha} = \ns{A'}{\alpha'}$;
            \item $\src{B} \eqc_s \src{B'}$ and $\trg{\beta} \eqc_t \trg{\beta'}$;
        \end{itemize}
        \item if $\src{A}, \src{A'}$ are initial configurations, with targets $\trg{\alpha}, \trg{\alpha'}$, and $\phi(\src{A}, \src{A'})$, then $\src{A} \eqc_s \src{A'}$ and  $\trg{\alpha} \eqc_t \trg{\alpha'}$;
        \item If $\src{A} \eqc_s \src{A'}$, then $\src{A} \in S_f \iff \src{A'} \in S_f$;
        \item $(\eqc_s, \eqc_t)$ is a \emph{final CT-diagram}~\cite{barthe2018secure}, i.e., if
        \begin{itemize}
            \item $\src{A} \eqc_s \src{A'}$ and $\trg{\alpha} \eqc_t \trg{\alpha'}$;
            \item $\src{A}$ and $\src{A'}$ are final;
            \item $\stepn{\trg{\alpha}}{\tau}{\ns{A}{\alpha}}{\trg{\beta}}$ and $\stepn{\trg{\alpha'}}{\tau'}{\ns{A'}{\alpha'}}{\trg{\beta'}}$;
            \item $\confrel{A}{p}{\alpha}$, $\confrel{A'}{p}{\alpha'}$, $\confrel{B}{p}{\beta}$ and $\confrel{B'}{p}{\beta'}$
        \end{itemize}
        then
        \begin{itemize}
            \item $\tau = \tau'$ and $\ns{A}{\alpha} = \ns{A'}{\alpha'}$;
            \item $\trg{\beta} \eqc_t \trg{\beta'}$ and they are both final.
        \end{itemize}
     \end{enumerate}
\end{definition}
The idea is that the relations $\eqc_s$ and $\eqc_t$ are stable under reduction, i.e., preservation of the observational non-interference is guaranteed.
The following theorem, referred to in~\cite{barthe2018secure} as Theorem 6, gives a sufficient condition to establish constant-time preservation.
\begin{theorem}[Security]\label{th:security}
If $\src{p}$ is constant-time w.r.t. $\phi$ and there is a general CT-simulation w.r.t.\ a general simulation, then $\comp{p}$ is constant-time w.r.t. $\phi$.
\end{theorem}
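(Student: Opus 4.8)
The plan is to unfold the definition of $\comp{p} \models \mathit{ONI}(\phi)$ and reduce it, via the general CT-simulation, to the assumed property $\src{p} \models \mathit{ONI}(\phi)$. Concretely, I would fix two initial target configurations $\trg{\alpha}, \trg{\alpha'}$ arising as compilations of initial source configurations $\src{A}, \src{A'}$ with $\phi(\src{A}, \src{A'})$, together with two equally long target computations $\stepn{\trg{\alpha}}{\tau}{n}{\trg{\beta}}$ and $\stepn{\trg{\alpha'}}{\tau'}{n}{\trg{\beta'}}$, and aim to establish $\tau = \tau'$ and $\trg{\beta} \in S_f \iff \trg{\beta'} \in S_f$. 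The first move is clause~2 of the general CT-simulation, which turns the relation on the initial configurations into $\src{A} \eqc_s \src{A'}$ and $\trg{\alpha} \eqc_t \trg{\alpha'}$, giving the starting point of an inductive diagram chase. Since the source semantics is deterministic and $\src{p} \models \mathit{ONI}(\phi)$, the unique source computations out of $\src{A}$ and $\src{A'}$ emit the same leakage at every length.

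The core is an induction on the number $k$ of source steps, maintaining the invariant that the $k$-th source configurations $\src{A_k}, \src{A'_k}$ are related by $\eqc_s$, their target counterparts $\trg{\alpha_k}, \trg{\alpha'_k}$ by $\eqc_t$, all four tied by $\confrel{\cdot}{p}{\cdot}$, and the target leakage accumulated so far in the two runs identical. For the inductive step I would first use source observational non-interference to extract that the next single source steps $\step{\src{A_k}}{t}{\src{A_{k+1}}}$ and $\step{\src{A'_k}}{t}{\src{A'_{k+1}}}$ carry the \emph{same} leakage $t$, which follows from equality of the $(k+1)$-step and $k$-step source leakages together with the list structure of leakages. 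Feeding this into the manysteps CT-diagram yields exactly what is needed to close the step: the two target macro-steps have equal length $\ns{\src{A_k}}{\trg{\alpha_k}} = \ns{\src{A'_k}}{\trg{\alpha'_k}}$ and equal leakage, and the equivalences $\src{A_{k+1}} \eqc_s \src{A'_{k+1}}$ and $\trg{\alpha_{k+1}} \eqc_t \trg{\alpha'_{k+1}}$ are restored. Well-foundedness of the whole construction, in particular that stuttering macro-steps with $\ns{\src{A_k}}{\trg{\alpha_k}} = 0$ cannot recur indefinitely, is guaranteed by the measure $\sz{\cdot}$ and clause~\ref{num:2} of the general simulation, so the source computation really advances and the target runs are exhausted after finitely many macro-steps.

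Because the matched macro-steps have equal length in the two target runs, their boundaries fall at the same global positions, so the length-$n$ prefix of one run aligns with the length-$n$ prefix of the other. If $n$ sits on a boundary, the accumulated-leakage invariant already gives $\tau = \tau'$; the finality claim then follows from clause~3 of the CT-simulation (which makes $\eqc_s$ respect $S_f$ on the source side), clause~3 of the general simulation (which produces final target configurations out of final source ones), and the final CT-diagram, clause~4, which forces $\trg{\beta}$ and $\trg{\beta'}$ to be simultaneously final.

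I expect the main obstacle to be the \emph{ragged end}: when $n$ does not land on a macro-step boundary, both runs sit $d$ steps into a common macro-step, and the CT-diagram only certifies equality of the \emph{whole} macro-step leakage, not of its length-$d$ prefix. Closing this gap requires descending from the coarse macro-step equality to a per-target-step equality, which is where the non-cancellable, list-shaped nature of leakages does the real work, letting me peel off the shared prefix step by step and conclude equality of the two partial leakages; the same care is needed to keep the bookkeeping honest across the zero-length stuttering macro-steps.
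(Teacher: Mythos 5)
Your proposal cannot be matched against a proof in the paper, because the paper contains none for this statement: Theorem~\ref{th:security} is background material, imported from~\cite{barthe2018secure} (where it is Theorem~6), and the appendix only proves Theorems~\ref{thm:gensim} and~\ref{thm:ctsim}, i.e., that the flattening transformation satisfies the \emph{hypotheses} of this theorem. So you are in effect reconstructing the cited proof. Your skeleton is the expected one and is largely sound: clause~2 of the CT-simulation on the initial configurations, an induction over source steps that extracts equality of single-step source leakages from $\src{p} \models \mathit{ONI}(\phi)$ by left-cancellation of list concatenation, the manysteps CT-diagram to propagate equal macro-step lengths and leakages, the measure $\sz{\cdot}$ against stuttering, and the final CT-diagram for synchronized termination.

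The genuine gap is exactly the point you flag and then wave away: the ragged end. The property you invoke --- leakage is a non-cancellable, list-shaped quantity --- does not let you ``peel off the shared prefix'' down to a common cut depth $d$ inside a macro-step. The manysteps CT-diagram only forces equality of the \emph{flattened} leakage of whole macro-steps, while individual steps may leak lists of different lengths (in this paper's own model an assignment leaks at least two atoms, a \wwhile-unfolding exactly one). Concretely: two matched two-step macro-steps leaking $[a,b]\cdot[c]$ and $[a]\cdot[b,c]$ have equal total leakage $[a,b,c]$ and equal length, and one can arrange source and target so that \emph{every} clause of the general simulation and of the general CT-simulation holds and the source is constant-time; yet the two one-step target prefixes leak $[a,b]$ versus $[a]$, so the target violates $\mathit{ONI}(\phi)$ at a cut of depth~$1$. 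Freeness of list concatenation only gives prefix-comparability of the two partial leakages, not their equality. What closes this hole in~\cite{barthe2018secure} is precisely their \emph{non-cancelling leakage model} hypothesis, which this paper states only informally (``a list of atomic leakages where not cancellable'') and which your argument never makes precise or actually uses in a form strong enough to force per-step agreement inside macro-steps. Without importing that hypothesis precisely (or strengthening the CT-diagram to per-step leakage equality within macro-steps), the statement as you set it up is not provable --- the counterexample above falsifies it.
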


\section{Proof of preservation}\label{sec:formalization}

In this section, we present the proof that control-flow flattening preserves
constant-time policy.
We first introduce a small imperative language, its semantics in
the form of a LTS and our leakage model.
Then, we formalize our obfuscation as a function from syntax to syntax, and
finally we prove the preservation of the security policy.

%
%

\subsection{The language and its (instrumented) semantics}
We consider a small imperative language with arithmetic and boolean expressions.
Let $\mi{Var}$ be a set program identifiers, the syntax is
\begin{align*}
    AExpr \ni e &\Coloneqq v \mid \wvar{x} \mid e_1 \wop{op} e_2
    \qquad \qquad v \in \mb{Z}, \quad \wop{op} \in \{\wop{+}, \wop{-}, \wop{*}, \wop{/}, \wop{\%}\}, \quad \wvar{x} \in\mi{Var} \\
    BExpr \ni b &\Coloneqq \mtt{true} \mid \mtt{false} \mid b_1 \wop{or} b_2 \mid \wop{not} b \mid e_1 \wleq e_2 \mid  e_1 \weq e_2\\
    Cmd \ni c &\Coloneqq \wskip \mid \wvar{x} \wassign e \mid c_1 \wseq c_2 \mid \wif b \wthen c_1 \welse c_2 \mid \wwhile b \wdo c
\end{align*}
We assume that each command in the syntax carries a permanent color either \emph{white} or not, typically $\digamma$.
Also, we stipulate that each \wwhile statement and all its components get a
unique non-white color, and that there is a function $\mi{color}$ yielding the
color of a statement.

%


Now,  we define the semantics and instantiate the framework of~\cite{barthe2018secure} to the \emph{non-cancelling constant-time policy}.
For that, we define a \emph{leakage model} to describe the information that an attacker can observe during the execution.
Recall from the previous section that the leakage is a list of atomic leaks.
We denote with $\cdot$ the list concatenation and with $[a]$ a list with a single element $a$.
Arithmetic and boolean expressions leak the sequence of operations required to
be evaluated; we assume that there is an observable $\underbar{op}$,
associated with the arithmetic operation being executed, but not with the logical ones (slightly simplifying
~\cite{barthe2018secure}).
Also we denote with $\eleak$ absence of leaking.
Our leakage model is defined by the following function $\leak{\cdot}{\cdot}$ that given
an expression (either arithmetic or boolean) and a state returns the
corresponding leakage:
%
\begin{align*}
    \leak{v}{\sigma} &= \leak{\wvar{x}}{\sigma} = \leak{\mtt{true}}{\sigma} = \leak{\mtt{false}}{\sigma} = [\eleak]\\
    \leak{\wop{not} b}{\sigma} &= \leak{b}{\sigma}\\
    \leak{e_1 \wop{op} e_2}{\sigma} &= \leak{e_1}{\sigma} \wlc \leak{e_2}{\sigma} \wlc \underbar{op}\\
    \leak{e_1 \wleq e_2}{\sigma} &= \leak{e_1 \weq e_2}{\sigma} = \leak{b_1 \wop{or} b_2}{\sigma} =\leak{e_1}{\sigma} \wlc \leak{e_2}{\sigma}
\end{align*}
Accesses to constants and identifiers leak nothing; boolean and relational expressions leak the concatenation of the leaks of their sub-expressions;
the arithmetic expressions append the observable of the applied operator to the leaks of their
sub-expressions.

We omit the semantics of arithmetic and boolean expression $\sem{\cdot}$ because fully standard~\cite{nielson2007semantics}; we only assume that each syntactic arithmetic operator
$\wop{op}$ has a corresponding semantic \mbox{operator $\mi{op}$.}

The semantics of commands is given in term of a transition relation $\xrightarrow{t}$ between configurations where $t$ is the leakage of that transition step.
As usual a configuration is a pair $c, \sigma$ consisting of a command and a state $\sigma \in \mi{Store}$ assigning values to program identifiers.
Given a program $p$ the set of initial configurations is $S_i = \{ p, \sigma \mid \sigma \in \mi{Store} \}$, and that of final configurations is $S_f = \{ \wskip, \sigma \mid \sigma \in \mi{Store} \}$.

Figure~\ref{fig:cmdsem} reports the instrumented semantics of the language.
Moreover, the semantics is assumed to keep colors, in particular in the rule for an $\digamma$-colored $\wwhile\!\!$, all the components of the $\wif\!\!$ in the target are also $\digamma$-colored, avoiding color clashes (see the .pdf for colors).
\begin{figure}
    \footnotesize
    \begin{mathpar}
        \inferrule
        { }
        {\step{\wvar{x} \wassign e, \sigma}{\leak{e}{\sigma} \wlc [\wvar{x}]}{\wskip, \sigma\{ \wvar{x} \mapsto \sem{a}\}}}

        \inferrule
        {
            \step{c_1, \sigma}{t}{c'_1, \sigma'}
        }
        {
            \step{c_1 \wseq c_2, \sigma}{t}{c'_1 \wseq c_2, \sigma'}
        }

        \inferrule
        { }
        {
            \step{\wskip \wseq c_2, \sigma}{t}{c_2, \sigma'}
        }

        \inferrule
        {
            \sem{b} = \mi{true}
        }
        {
            \step{\wif b \wthen c_1 \welse c_2, \sigma}{\leak{b}{\sigma}\wlc[\mi{true}]}{c_1, \sigma}
        }

        \inferrule
        {
            \sem{b} = \mi{false}
        }
        {
            \step{\wif b \wthen c_1 \welse c_2, \sigma}{\leak{b}{\sigma}\wlc[\mi{false}]}{c_2, \sigma}
        }


        \inferrule
        { }
        {
            \step{\color{ForestGreen}{\wwhile b \wdo c}, \sigma}{[\eleak]}{\color{ForestGreen}{\wif b \wthen ( c \wseq \wwhile b \wdo c ) \welse \wskip}, \sigma}
        }



    \end{mathpar}

    \caption{Instrumented operational semantics for commands.} \label{fig:cmdsem}
\end{figure}
%


\subsection{Control-flow flattening formalization}\label{sec:obfu}

Recall that the initial program being obfuscated is $\src{p}$.
For the sake of presentation, we will adopt the sugared syntax we used in Section~\ref{sec:intro} and represent a sequence of nested conditionals
in the obfuscated program as the command $\trg{\wswitch e : cs}$, where $\trg{cs} = \trg{\wlist{(v_1, c_1)}{\ldots}{(v_n : c_n)}}$, with semantics
\begin{mathpar}
{\footnotesize
    \inferrule
    {
        (\sem{\trg{e}}, \trg{c}) \notin \trg{\mtt{cs}}
    }
    {
        \step{\trg{\wswitch e : \mtt{cs}, \sigma}}{\leak{e}{\sigma}}{\trg{\wskip, \sigma}}
    }

    \inferrule
    {
        (\sem{\trg{e}}, \trg{c}) \in \trg{\mtt{cs}}
    }
    {
        \step{\trg{\wswitch e : \mtt{cs}, \sigma}}{\leak{e}{\sigma}}{\trg{c, \sigma}}
    }
}
\end{mathpar}

\noindent
Now, let $\trg{\wvar{pc}}$ be a fresh identifier, called \emph{program counter}.
Then, following~\cite{blazy2016formal},%
the obfuscated version $\comp{c}$ of the command $\src{c}$ is
%
  \vspace{-2mm}
\[
\trg{
\begin{aligned}
    &\trg{\wvar{pc}} \wassign 1 \wseq\\
                 &\wwhile 1 \wleq \trg{\wvar{pc}} \wdo\\
                 &\qquad \wswitch \trg{\wvar{pc}} : \com{\cmdlbl{\trg{\wvar{pc}}}{c}{1}{0}}
\end{aligned}}
\]
%
where
\vspace{-2mm}
%
\begin{align*}
&    \cmdlbl{\trg{\wvar{pc}}}{\wskip}{n}{m} = \wlist{(n, \trg{\wskip \wseq \trg{\wvar{pc}} \wassign \com{m}})}\\
&    \cmdlbl{\trg{\wvar{pc}}}{x \wassign e}{n}{m} = \wlist{(n, \trg{x \wassign e \wseq \trg{\wvar{pc}} \wassign \com{m}})}\\
&    \cmdlbl{\trg{\wvar{pc}}}{c_1 \wseq c_2}{n}{m} = \cmdlbl{\trg{\wvar{pc}}}{c_1}{n}{n+\size{c_1}} \wlc \cmdlbl{\trg{\wvar{pc}}}{c_2}{n+\size{c_1}}{m}\\
&    \cmdlbl{\trg{\wvar{pc}}}{\wif b \wthen c_1 \welse c_2}{n}{m} = \\
&   \qquad\qquad \wlist{(n, \trg{\wif b \wthen \trg{\wvar{pc}} \wassign \com{n + 1} \welse \trg{\wvar{pc}}
\wassign \com{n+1+\size{c_1}}})} \wlc\\
&       \qquad \qquad  \cmdlbl{\trg{\wvar{pc}}}{c_1}{n+1}{m} \wlc 
 \cmdlbl{\trg{\wvar{pc}}}{c_2}{n+1+\size{c_1}}{m}\\
&  \cmdlbl{\trg{\wvar{pc}}}{\wwhile b \wdo c}{n}{m} = \\
& \qquad \qquad \wlist{(n, \trg{\wif b \wthen \trg{\wvar{pc}} \wassign \com{n+1} \welse \trg{\wvar{pc}}
\wassign \com{n+1+\size{c}}})} \wlc\\
&  \qquad  \qquad \cmdlbl{\trg{\wvar{pc}}}{c}{n+1}{n}  \wlc \wlist{(n+1+\size{c}, \trg{\wskip \wseq \trg{\wvar{pc}} \wassign \com{m}})}
\end{align*}
with $\size{\cdot}$ defined as follows
\vspace{-3mm}
\begin{align*}
    \size{c} &= 1 \text{ if } c \in \{ \wskip, \cdot \wassign \cdot \}\\
    \size{c_1 \wseq c_2} &= \size{c_1} + \size{c_2}\\
    \size{\wif b \wthen c_1 \welse c_2} &= 1 + \size{c_1} + \size{c_2}\\
    \size{\wwhile b \wdo c} &= 2 + \size{c}
\end{align*}
The obfuscated version of a program  $\src{p}$ is a
loop with condition $\trg{1 \leq \trg{\wvar{pc}}}$ and with body a $\trg{\wswitch}$
statement.
The $\trg{\wswitch}$ condition is on the values of $\trg{\wvar{pc}}$ and its cases
correspond to the flattened statements, obtained from the function
$\cmdlbl{\trg{\wvar{pc}}}{c}{n}{m}$.
It returns a list containing the cases of the $\trg{\wswitch}$ and it is inductively defined on the syntax of commands:
the first parameter $\trg{\wvar{pc}}$ is the identifier to use for program counter; the second is
the command $\src{c}$ to be flattened; the parameter $n$ represents the value of the
guard of the case generated for the first statement of $\src{c}$; the last parameter
$m$ represents the value to be assigned to $\trg{\wvar{pc}}$ by the last $\trg{\wswitch}$
case generated.
For example, the flattening of a sequence
$\src{c_1 \wseq c_2}$ generates the cases corresponding to $\src{c_1}$ and
$\src{c_2}$, and then concatenates them.
Note that the values of the program counter for the cases of $\src{c_2}$ start from the value
assigned to $\trg{\wvar{pc}}$ by the last case generated for $\src{c_1}$, i.e., $n +
\size{c_1}$, where the function $\size{\cdot}$ returns the ``length'' of $\src{c_1}$.
For a program $\src{p}$, we use $1$ as initial value of $n$ and $0$ as last value to be assigned so as to exit from the $\trg{\wwhile}$ loop.

  \subsection{Correctness and security}

Since obfuscation does not change the language (apart from sugaring nested $\wif\!\!$ commands); the operational semantics is deterministic; and there are no \emph{unsafe} programs (i.e., a program gets stuck iff execution has completed), the correctness of obfuscation directly follows from the existence of a general simulation between the source and the target languages~\cite{barthe2018secure}.
For that, inspired by~\cite{blazy2016formal}, we define the relation $\confrel{}{p}{}$
between source and target configurations shown in \figurename~\ref{fig:simobf}.
Intuitively, the relation $\confrel{}{p}{}$ matches source and target
configurations with the same behaviour, depending on whether they are final
(third rule), their execution originated from a loop
(Rule~\rulename{(Colored)}) or not (Rule~\rulename{(White)}).
Note that we differentiate white and colored cases as to avoid circular
reasoning in the derivations of $\confrel{}{p}{}$.
More specifically, our relation matches a configuration $\src{A}$ in the source
with a corresponding $\trg{\alpha}$ in the target.
Actually, $\trg{\alpha.\mi{cmd}}$ is the $\trg{\wwhile}$ loop of the obfuscated program (fourth premise in Rule~\rulename{(White)} and third in Rule~\rulename{(Colored)}), whereas $\trg{\alpha.\sigma}$ is equal to $\src{A.\sigma}$ except for the value of $\trg{\wvar{pc}}$.
Its value is mapped to the case of the $\trg{\wswitch}$ corresponding to the next command in $\src{A}$ (first premise in Rule~\rulename{(White)} and fifth in
Rule~\rulename{(Colored)}).

To understand how our simulation works, recall the example from
Section~\ref{sec:intro}.
By Rule~\rulename{(White)} we relate the configuration reached at line~$(3)$
at the source level with that of the obfuscated program starting at
line~$(2)$ and with a state equal to that of the source level with the additional binding $\trg{\wvar{pc}} \mapsto \trg{3}$.
Similarly, we relate the configuration reached at line~$(6)$ at the source
level and its obfuscated counterpart (again at line~$(2)$ at the obfuscated
level), using Rule~\rulename{(Colored)} and noting that the source
configuration derives from the execution of a loop.

The following theorem  ensures that the relation $\confrel{}{p}{}$ is a general simulation.
\begin{restatable}{theorem}{thmgensim}\label{thm:gensim}
For all programs $\src{p}$, the relation $\confrel{}{p}{}$ is a general simulation.
\end{restatable}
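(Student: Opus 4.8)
The plan is to proceed by induction on the derivation of the source step $\step{\src{A}}{}{\src{B}}$ — equivalently, by structural induction on the head statement of $\src{A.\mi{cmd}}$ — establishing the three clauses of Definition~\ref{def:manystepssim} in each case. A general simulation ignores leakage (the arrows in Definition~\ref{def:manystepssim} carry no label $t$), so I only need to reproduce the \emph{step structure} and defer leak matching to the CT-diagram. The invariant driving every case is read off from \figurename~\ref{fig:simobf}: whenever $\confrel{\src{A}}{\src{p}}{\trg{\alpha}}$ holds, $\trg{\alpha.\mi{cmd}}$ is the obfuscated $\trg{\wwhile}$ loop and $\trg{\alpha.\sigma}$ agrees with $\src{A.\sigma}$ off $\trg{\wvar{pc}}$, which stores exactly the index $n$ that $\cmdlbl{\trg{\wvar{pc}}}{\cdot}{n}{\cdot}$ assigned to the next primitive statement of $\src{A.\mi{cmd}}$. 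Each case then reduces to checking that running the $\trg{\wswitch}$ case selected by $\trg{\wvar{pc}}$ lands the target on the bare $\trg{\wwhile}$ with $\trg{\wvar{pc}}$ now indexing the case for $\src{B}$'s next statement.

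First I would dispatch the non-stuttering cases, where $\ns{\src{A}}{\trg{\alpha}} > 0$ and clause~(\ref{num:2}) is vacuous. For an assignment or $\wskip$ head, the source takes one step while the target unfolds the $\trg{\wwhile}$, evaluates its guard, selects case $n$, runs the singleton body, updates $\trg{\wvar{pc}}$, and performs the administrative steps returning to the bare $\trg{\wwhile}$; this is a fixed number of micro-steps, which pins down $\ns{\src{A}}{\trg{\alpha}}$, and the resulting state agrees with $\src{B.\sigma}$ off $\trg{\wvar{pc}}$, re-establishing the invariant via Rule~\rulename{(White)} (or Rule~\rulename{(Colored)} when $\src{A}$ is loop-derived). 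The conditional head is analogous: its case branches on the very same guard $b$ as the source rule and sets $\trg{\wvar{pc}}$ to the head index of the taken branch, namely $n+1$ or $n+1+\size{c_1}$. Sequencing is handled by the induction hypothesis on the stepping component, using that $\cmdlbl{\cdot}{c_1 \wseq c_2}{n}{m}$ merely concatenates the two case lists, so the invariant for $c_1 \wseq c_2$ restricts to that for $c_1$.

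The delicate part is the \textbf{stuttering} steps, where $\ns{\src{A}}{\trg{\alpha}} = 0$ and clause~(\ref{num:2}) demands $\sz{\src{B}} < \sz{\src{A}}$. Two source reductions move the source without moving the target: the administrative $\wskip \wseq c_2 \to c_2$, and the loop unfolding $\wwhile b \wdo c \to \wif b \wthen (c \wseq \wwhile b \wdo c) \welse \wskip$. In both, $\trg{\wvar{pc}}$ already indexes the case that will be consulted next — the head of $c_2$, respectively the case flattening the loop — so the target must stay put and the whole burden falls on $\sz{\cdot}$. The loop case is where circularity threatens, and I expect it to be the main obstacle: the unfolded $\wif$ \emph{textually contains another copy} of $\wwhile b \wdo c$, so no measure based on raw syntactic size can ever satisfy $\sz{\wwhile b \wdo c} > \sz{\wif b \wthen (c \wseq \wwhile b \wdo c) \welse \wskip}$. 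The intended resolution is the coloring discipline: the components of the unfolded $\wif$ inherit the loop's non-white color $\digamma$, so $\sz{\cdot}$ can charge a colored, loop-derived $\wif$ strictly less than the $\wwhile$ it came from without ever descending into the contained copy — legitimate because the next step out of that $\wif$ is non-stuttering and need not be accounted for by $\sz{\cdot}$. Correspondingly, separating Rule~\rulename{(White)} from Rule~\rulename{(Colored)} keeps the derivation of $\confrel{}{\src{p}}{}$ on loop-derived configurations from re-unfolding the inner loop and diverging.

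Finally, the finality clause follows once $\src{B.\mi{cmd}} = \wskip$: by construction the last meaningful case set $\trg{\wvar{pc}}$ to $0$ (the exit value $m = 0$ chosen for the whole program), so the guard $\trg{1 \wleq \trg{\wvar{pc}}}$ fails, the target $\trg{\wwhile}$ exits in a fixed number of steps to $\trg{\wskip}$, and the resulting final configuration is related by the final rule of \figurename~\ref{fig:simobf}. Pulling things together, the real content is not any individual case but the \emph{joint} definition of $\ns{\cdot}{\cdot}$ and $\sz{\cdot}$: one must check that $\ns{\cdot}{\cdot}$ counts exactly one obfuscated iteration on non-administrative heads and returns $0$ on precisely the two stuttering reductions, while $\sz{\cdot}$, interpreted through the coloring, strictly decreases on exactly those reductions.
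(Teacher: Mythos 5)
Your proposal follows essentially the same route as the paper's proof: induction on the source step, the invariant that any $\trg{\alpha}$ related to $\src{A}$ carries the obfuscated $\trg{\wwhile}$ loop and a store agreeing with $\src{A.\sigma}$ except that $\trg{\wvar{pc}}$ indexes the case of the next statement; a syntax-directed $\ns{\cdot}{\cdot}$ that is positive on assignment, $\wskip$ and $\wif$ heads and zero exactly on $\wskip \wseq c_2$ and $\wwhile$ unfolding; sequencing discharged by the induction hypothesis; and finality via the exit value $0$ of $\trg{\wvar{pc}}$ (the paper packages that last observation as Lemma~\ref{lemma:skip-implies-pc0}).

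The one point where you diverge, and where your reasoning is off, is the measure. You claim that no measure on raw (uncolored) syntax can satisfy $\sz{\wif b \wthen (c \wseq \wwhile b \wdo c) \welse \wskip} < \sz{\wwhile b \wdo c}$, and that the coloring discipline must therefore be built into $\sz{\cdot}$. The paper's measure refutes this: it is entirely color-blind, setting $\sz{A} = 2\cdot\sz{(c,A.\sigma)}+3$ when $A.\mi{cmd} = \wwhile b \wdo c$, $\sz{A} = \sz{(c_2,A.\sigma)}+1$ when $A.\mi{cmd} = \wskip \wseq c_2$, and $\sz{A} = 0$ otherwise. In particular an $\wif$-headed configuration gets measure $0$ no matter what it contains, so the decrease on loop unfolding is immediate. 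What makes this legitimate is precisely the observation you make yourself: the measure only needs to decrease on stuttering steps, and the step out of the unfolded $\wif$ is non-stuttering (there $\ns{\cdot}{\cdot} = 9$), so $\sz{\cdot}$ need not be monotone with respect to subterms and may simply refuse to descend into the $\wif$. Your impossibility claim holds only for measures monotone in the subterm ordering, such as syntactic size. In the paper the colors play no role in $\sz{\cdot}$; they serve solely to separate Rules~\rulename{(White)} and \rulename{(Colored)} in the definition of $\confrel{}{p}{}$, avoiding circular derivations of the relation, which is the purpose you do state correctly at the end of that paragraph. A coloring-sensitive measure as you sketch it could be made to work, so your plan is not broken, but it is unnecessary machinery obscuring the simpler fact that a flat, non-monotone measure suffices.
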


The correctness of the obfuscation is now a corollary of Theorem~\ref{thm:gensim}.
%
\begin{corollary}[Correctness]\label{thm:correctness}
For all commands $\src{c}$ and store $\sigma$
\[
\src{c}, \sigma \rightarrow^* \src{\wskip}, \sigma' \quad \text{iff} \quad \trg{\comp{c}}, \sigma \rightarrow^* \trg{\wskip}, \sigma'
\]
\end{corollary}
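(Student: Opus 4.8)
The plan is to derive the corollary directly from Theorem~\ref{thm:gensim}, exploiting the two structural facts recalled just before the statement: the semantics is deterministic, and a configuration is stuck exactly when its command is $\src{\wskip}$ (there are no unsafe programs). Throughout I take $p = c$ as the program being obfuscated. First I observe that the initial target configuration $\trg{\comp{c}}, \sigma$ reaches, in a fixed number of steps, the configuration $\trg{\alpha_0}$ whose command is the $\trg{\wwhile}$ loop of $\trg{\comp{c}}$ and whose store is $\sigma$ updated with $\trg{\wvar{pc}} \mapsto 1$; by the premises of Rule~\rulename{(White)} (the command being the $\trg{\wwhile}$ loop, and $\trg{\wvar{pc}}$ pointing to the case of the next source command, here the whole $\src{c}$ at case $1$) this is precisely a target configuration related to the source initial configuration, i.e.\ $\confrel{\src{c}, \sigma}{p}{\trg{\alpha_0}}$.

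For the forward implication, assume $\src{c}, \sigma \rightarrow^* \src{\wskip}, \sigma'$ and proceed by induction on the length of this computation. At each step $\step{\src{A}}{}{\src{B}}$ with $\confrel{\src{A}}{p}{\trg{\alpha}}$, clause~(1) of Definition~\ref{def:manystepssim} yields a $\trg{\beta}$ with $\stepn{\trg{\alpha}}{}{\ns{\src{A}}{\trg{\alpha}}}{\trg{\beta}}$ and $\confrel{\src{B}}{p}{\trg{\beta}}$; concatenating these (possibly empty) target segments produces a target computation ending in a configuration related to $\src{\wskip}, \sigma'$. Since $\src{\wskip}, \sigma' \in S_f$, clause~(3) then supplies a final target configuration $\trg{\wskip}, \sigma'' \in S_f$ related to it. It remains to read off the store: because the relation keeps $\trg{\beta.\sigma}$ equal to the source store except on the fresh identifier $\trg{\wvar{pc}}$ (with $\trg{\wvar{pc}} \mapsto 0$ upon loop exit), $\sigma''$ agrees with $\sigma'$ on every original variable, which is what the statement asks.

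The backward implication follows from the forward one by determinism together with the measure function $\sz{\cdot}$. Assume $\trg{\comp{c}}, \sigma \rightarrow^* \trg{\wskip}, \sigma'$. The source computation from $\src{c}, \sigma$ either terminates or diverges. Were it to diverge, I would build an infinite target computation and contradict termination of the target: each source step maps to a target segment of length $\ns{\src{A}}{\trg{\alpha}} \geq 0$, and clause~(2) of Definition~\ref{def:manystepssim} forces $\sz{\cdot}$ to strictly decrease on every stuttering step (i.e.\ when $\ns{\src{A}}{\trg{\alpha}} = 0$); as $\sz{\cdot}$ ranges over $\mb{N}$, only finitely many consecutive stutters are possible, so infinitely many source steps contribute a genuine target step and the simulated target run is infinite, impossible since $\trg{\comp{c}}, \sigma$ reaches $\trg{\wskip}, \sigma'$ and is then stuck. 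Hence the source terminates, say at $\src{\wskip}, \sigma''$; the forward direction gives $\trg{\comp{c}}, \sigma \rightarrow^* \trg{\wskip}, \sigma''$, and determinism of the target semantics forces $\sigma'' = \sigma'$, closing the \emph{iff}.

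The step I expect to be the main obstacle is the divergence-preservation argument in the backward direction: marshalling clause~(2) and the well-foundedness of $\sz{\cdot}$ to guarantee that an infinite source run cannot be simulated by a target run that stutters forever, so that target termination genuinely reflects back to source termination rather than merely being compatible with it. The remaining bookkeeping (the fixed prologue that reaches $\trg{\alpha_0}$, and projecting away $\trg{\wvar{pc}}$ to equate the final stores) is routine once the relation of Figure~\ref{fig:simobf} is unfolded.
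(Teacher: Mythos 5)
Your proof is correct and takes essentially the same route as the paper: the paper derives the corollary directly from Theorem~\ref{thm:gensim} together with the determinism of the semantics and the absence of unsafe programs (as stated just before the corollary), leaving all details implicit. Your write-up simply fleshes out that argument --- the forward simulation of each source step, the final-configuration clause, the ruling out of infinite stuttering via $\sz{\cdot}$, and the projection away of $\trg{\wvar{pc}}$ --- none of which departs from the paper's intended justification.
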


The next step is showing that the control-flow flattening obfuscation preserves the constant-time programming policy.
For that we define $\eqc$ below and we show that $(\eqc, \eqc)$ is a general CT-simulation, as required by Theorem~\ref{th:security}.
\begin{definition}\label{def:eqc}
    Let $\mathcal A$ and $\mathcal{A}'$ be two (source or obfuscated) configurations, then $\mathcal{A} \eqc \mathcal{A}'$ iff $\mathcal{A}.\mi{cmd} = \mathcal{A}'.\mi{cmd}$. 
\end{definition}
We prove the following:
\begin{restatable}{theorem}{thmctsim}\label{thm:ctsim}
The pair $(\eqc, \eqc)$ is a \emph{general CT-simulation} w.r.t.\ $\confrel{}{p}{}$, $\ns{\cdot}{\cdot}$ and $\sz{\cdot}$.
\end{restatable}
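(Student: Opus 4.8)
The plan is to verify the four clauses of the general CT-simulation definition for the pair $(\eqc,\eqc)$, exploiting throughout one structural fact about the leakage model: the leakage $\leak{e}{\sigma}$ of any arithmetic or boolean expression is fixed by the syntax of $e$ alone and does not depend on $\sigma$, since constants and identifiers leak $[\eleak]$ and operators only append the fixed observable $\underbar{op}$. Hence the only state-dependent atoms that can occur in a trace are the markers $[\mi{true}]$ and $[\mi{false}]$ emitted when a conditional or an unfolded loop guard is resolved. With this in hand clauses~2 and~3 are immediate: two initial configurations of $\src{p}$ share the command $\src{p}$ and their targets share $\comp{p}$, so $\src{A}\eqc\src{A'}$ and $\trg{\alpha}\eqc\trg{\alpha'}$; and since $\src{A}\eqc\src{A'}$ unfolds to $\src{A}.\mi{cmd}=\src{A'}.\mi{cmd}$, we get $\src{A}\in S_f$ iff the common command is $\wskip$ iff $\src{A'}\in S_f$.

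The substance is clause~1, the manysteps CT-diagram; I would first settle the two required configuration equivalences. From $\src{A}.\mi{cmd}=\src{A'}.\mi{cmd}$ I case-split on the rule firing at the head of this common command: assignment, $\wskip$-elimination in a sequence, while-unfolding, or a conditional. The first three fix $\src{B}.\mi{cmd}$ from the command alone; only the conditional consults the store, appending $[\mi{true}]$ or $[\mi{false}]$, and since clause~1 assumes both source steps carry the same leakage $t$ the branches coincide. In every case $\src{B}.\mi{cmd}=\src{B'}.\mi{cmd}$, i.e.\ $\src{B}\eqc\src{B'}$. The target equivalence is even cheaper: the hypotheses $\confrel{B}{p}{\beta}$ and $\confrel{B'}{p}{\beta'}$ force both $\trg{\beta}$ and $\trg{\beta'}$ to carry as command the obfuscated $\trg{\wwhile}$ loop of $\comp{p}$, so $\trg{\beta}\eqc\trg{\beta'}$ holds trivially.

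It remains to prove $\tau=\tau'$ and $\ns{A}{\alpha}=\ns{A'}{\alpha'}$. Here I would split each target run of length $\ns{A}{\alpha}$ into a bookkeeping part---unfolding the $\trg{\wwhile}$, testing the guard $1\wleq\trg{\wvar{pc}}$, dispatching the $\trg{\wswitch}$ on $\trg{\wvar{pc}}$, and performing the closing $\trg{\wvar{pc}}\wassign\com{m}$---and the body of the selected case, namely the image of the head redex $c$ of $\src{A}.\mi{cmd}$ under $\cmdlbl{\trg{\wvar{pc}}}{c}{n}{m}$. The bookkeeping leakage is state-independent: at every non-final configuration the guard $1\wleq\trg{\wvar{pc}}$ is $\mi{true}$, and, crucially, neither the $\trg{\wswitch}$ expression nor the $\trg{\wvar{pc}}$ assignments ever leak the \emph{value} of $\trg{\wvar{pc}}$, only the fixed atoms $[\eleak]$ and $[\trg{\wvar{pc}}]$. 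The case body, in turn, has leakage and step count determined by the syntax of $c$, the only freedom being the branch chosen inside a flattened $\wif$ or loop-guard case, which is again pinned down by the common $t$. Adding the two parts yields $\tau=\tau'$ and $\ns{A}{\alpha}=\ns{A'}{\alpha'}$, including the stuttering case $\ns{A}{\alpha}=0$ of a source while-unfold where both runs are empty. Clause~4, the final CT-diagram, is then the specialisation to $\src{A}.\mi{cmd}=\src{A'}.\mi{cmd}=\wskip$: the relation forces the value of $\trg{\wvar{pc}}$ to $0$, one unfolding makes $1\wleq\trg{\wvar{pc}}$ false, both targets reduce to $\trg{\wskip}$, and the leakage is once more state-independent.

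I expect the real obstacle to be the leakage bookkeeping in clause~1: making rigorous that the flattening-specific steps contribute identical traces in the two runs even when their $\trg{\wvar{pc}}$ values differ. This rests on the obfuscation leaking only the \emph{occurrence} of a $\trg{\wvar{pc}}$ access and never its value, together with an exact correspondence between the source leakage of a redex $c$ and the leakage generated by its image $\cmdlbl{\trg{\wvar{pc}}}{c}{n}{m}$. Some care is also needed to align the $\rulename{(White)}$ and $\rulename{(Colored)}$ cases of $\confrel{}{p}{}$ with the head redex, so that the $\trg{\wswitch}$ selects the intended case in both runs.
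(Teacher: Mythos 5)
Your proposal is correct and follows essentially the same route as the paper: you verify the four clauses directly, with clauses 2--3 immediate from the syntactic definition of $\eqc$, clause 1 reduced to a case analysis on the common source command plus the state-independence of the leakage model (the paper packages exactly these two facts as small lemmata, and gets $\ns{A}{\alpha} = \ns{A'}{\alpha'}$ from the syntax-directed definition of $\ns{\cdot}{\cdot}$ where you count target steps operationally), and clause 4 handled by forcing $\trg{\wvar{pc}}$ to $0$ at final configurations. Your more detailed accounting of the bookkeeping leakage in the target runs -- that $\trg{\wvar{pc}}$ accesses leak only their occurrence, never their value -- spells out what the paper asserts in a single sentence, but the underlying argument is the same.
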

The main result of our paper directly follows from the theorem above, because the transformation in Section~\ref{sec:obfu} satisfies Definition~\ref{def:securecomp}:
\begin{corollary}[Constant-time preservation]\label{thm:ctpres}\ \\
The control-flow fattening obfuscation preserves the constant-time policy.
\end{corollary}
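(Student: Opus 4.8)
The plan is to obtain the result purely by composition: Corollary~\ref{thm:ctpres} is just the instance of Definition~\ref{def:securecomp} for the specific transformation $\comp{\cdot}$ of Section~\ref{sec:obfu}, and the two supporting theorems already discharge exactly the hypotheses demanded by Theorem~\ref{th:security}. First I would recall that, in this instrumented setting, ``$\src{p}$ is constant-time w.r.t.\ $\phi$'' is by definition ``$\src{p} \models \mathit{ONI}(\phi)$'' (with $\phi$ the equivalence identifying states that agree on the public variables). Hence proving that the obfuscation preserves the constant-time policy amounts to proving that $\comp{\cdot}$ preserves observational non-interference, i.e.\ that $\src{p} \models \mathit{ONI}(\phi) \Rightarrow \comp{p} \models \mathit{ONI}(\phi)$ for every source program $\src{p}$, which is precisely the content of Definition~\ref{def:securecomp}.

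Next I would fix an arbitrary program $\src{p}$ and assume $\src{p} \models \mathit{ONI}(\phi)$. By Theorem~\ref{thm:gensim} the relation $\confrel{}{p}{}$ is a general simulation with respect to the chosen $\ns{\cdot}{\cdot}$ and measure $\sz{\cdot}$. By Theorem~\ref{thm:ctsim} the pair $(\eqc, \eqc)$ is a general CT-simulation with respect to that very same $\confrel{}{p}{}$, $\ns{\cdot}{\cdot}$ and $\sz{\cdot}$. These two facts are exactly the premises of Theorem~\ref{th:security}, so applying it yields $\comp{p} \models \mathit{ONI}(\phi)$, i.e.\ $\comp{p}$ is constant-time w.r.t.\ $\phi$. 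Since $\src{p}$ was arbitrary, the implication holds for every program, which is the statement of Definition~\ref{def:securecomp} and therefore establishes the corollary.

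The only point that needs genuine care -- and the step I expect to be the sole, minor obstacle -- is the coherence and uniformity of the witnesses across the two theorems. Theorem~\ref{th:security} requires a general CT-simulation ``w.r.t.\ a general simulation'', so I must verify that the relation $\confrel{}{p}{}$, the step-predicting function $\ns{\cdot}{\cdot}$ and the measure $\sz{\cdot}$ used in Theorem~\ref{thm:ctsim} are literally the same objects shown to form a general simulation in Theorem~\ref{thm:gensim}; otherwise the composition would be vacuous. Equally, because both supporting theorems are already quantified ``for all programs $\src{p}$'', the universal quantifier required by Definition~\ref{def:securecomp} comes for free, so no induction or case analysis on $\src{p}$ is needed here. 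All the real work is concentrated in Theorems~\ref{thm:gensim} and~\ref{thm:ctsim}; the corollary itself is a one-line instantiation, and I anticipate no difficulty beyond aligning these definitions.
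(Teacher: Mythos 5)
Your proposal is correct and matches the paper's own argument: the corollary is obtained exactly by instantiating Theorem~\ref{th:security} with the general simulation of Theorem~\ref{thm:gensim} and the general CT-simulation of Theorem~\ref{thm:ctsim}, so that the transformation satisfies Definition~\ref{def:securecomp}. Your additional remark about checking that $\confrel{}{p}{}$, $\ns{\cdot}{\cdot}$ and $\sz{\cdot}$ are the same witnesses in both theorems is a sensible point of care, and it indeed holds here since Theorem~\ref{thm:ctsim} is stated w.r.t.\ precisely those objects.
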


The proofs of the theorems above are in the Appendix~\ref{sec:proof}.

\begin{figure}[tb]
\footnotesize
\begin{mathpar}
    \inferrule
    [{(White)}]
    {
        \mi{color}(\src{c}) = \mi{white}\\
        \trg{\sigma'} = \src{\sigma} \cup \{ \trg{\wvar{pc}} \mapsto n\} \\
        \trg{\mtt{ls}} = \cmdlbl{\trg{\wvar{pc}}}{p}{1}{0}\\
        \trg{c'} = \trg{\wwhile (1 \wleq \trg{\wvar{pc}}) \wdo (\wswitch \trg{\wvar{pc}} : \mtt{ls})}\\
        \cmdrelc{-,\trg{\wvar{pc}}}{\mi{c}}{\trg{\mtt{ls}}[n], m}
    }
    {
        \confrel{c, \sigma}{p}{c', \sigma'}
    }

    \inferrule
    [{(Colored)}]
    {
        \trg{\sigma'} = \src{\sigma} \cup \{ \trg{\wvar{pc}} \mapsto n \} \quad
        \trg{\mtt{ls}} = \cmdlbl{\trg{\wvar{pc}}}{p}{1}{0}\\
        \trg{c'} = \trg{\wwhile (1 \wleq \trg{\wvar{pc}}) \wdo (\wswitch \trg{\wvar{pc}} : \mtt{ls})}\\
        \src{\wwhile b \wdo c''} \in \src{p}\\
        \mi{color}(\src{\wwhile b \wdo c''}) = \mi{color}(\src{c}) \neq \mi{white}\\
        \cmdrelc{-,\trg{\wvar{pc}}}{\wwhile b \wdo c''}{\trg{\mtt{ls}}[n_0], m'}\\
        \cmdrelw{n_0, \trg{\wvar{pc}}}{c}{\trg{\mtt{ls}}[n], m} \\
    }
    {
        \confrel{c, \sigma}{p}{c', \sigma'}
    }

    \inferrule
    {
        \trg{\sigma'} = \src{\sigma} \cup \{ \trg{\wvar{pc}} \mapsto n \}
    }
    {
        \confrel{\wskip, \sigma}{p}{\wskip, \sigma'}
    }

    \\\mathclap{\rule{\textwidth}{0.4pt}}\\

    \inferrule
    { }
    {
        \cmdrel{n_0, \trg{\wvar{pc}}}{\wskip}{\trg{\mtt{ls}}[n], m}
    }



    \inferrule
    {
        \trg{\mtt{ls}}[n] = (n, \trg{\wvar{x} \wassign e \wseq \wvar{pc} \wassign \com{m}})
    }
    {
        \cmdrel{n_0, \trg{\wvar{pc}}}{\wvar{x} \wassign e}{\trg{\mtt{ls}}[n], m}
    }


    \inferrule
    {
        \cmdrel{n_0, \trg{\wvar{pc}}}{c_1}{\trg{\mtt{ls}}[n],m'} \\
        \cmdrel{n_0, \trg{\wvar{pc}}}{c_2}{\trg{\mtt{ls}}[m'],m}
    }
    {
        \cmdrel{n_0, \trg{\wvar{pc}}}{c_1 \wseq c_2}{\trg{\mtt{ls}}[n],m}
    }

    \inferrule
    {
        \trg{\mtt{ls}}[n] = (n, \trg{\wif b \wthen \wvar{pc} \wassign \com{n+1} \welse \trg{\wvar{pc}} \wassign \com{n+1+\size{c_1}}})\\
        \cmdrel{n_0, \trg{\wvar{pc}}}{c_1}{\trg{\mtt{ls}}[n+1], m}\\
        \cmdrel{n_0, \trg{\wvar{pc}}}{c_2}{\trg{\mtt{ls}}[n+1+\size{c_1}], m}
    }
    {
        \cmdrel{n_0, \trg{\wvar{pc}}}{\wif b \wthen c_1 \welse c_2}{\trg{\mtt{ls}}[n], m}
    }

    \inferrule
    {
        \trg{\mtt{ls}}[n] = (n, \trg{\wskip \wseq \wvar{pc} \wassign \com{n_0}}c)\\
    }
    {
        \cmdrelw{n_0, \trg{\wvar{pc}}}{\wwhile b \wdo c}{\trg{\mtt{ls}}[n], n_0}
    }

    \inferrule
    { }
    {
        \cmdrelw{n_0, \trg{\wvar{pc}}}{\wwhile b \wdo c}{\trg{\mtt{ls}}[n_0], n_0}
    }

    \inferrule
    {
        \cmdrelw{n, \trg{\wvar{pc}}}{\wif b \wthen (c \wseq \wwhile b \wdo c) \welse \wskip}{\trg{\mtt{ls}}[n], m}
    }
    {
        \cmdrelc{- , \trg{\wvar{pc}}}{\wwhile b \wdo c}{\trg{\mtt{ls}}[n], m}
    }
\end{mathpar}

where $\sim\, \in \{ \diamond, \bowtie \}$, and the first parameter ($n_0$) is immaterial in $\bowtie$.

\caption{Definition of $\confrel{}{p}{}$ relation on configurations and its auxiliary relations.} \label{fig:simobf}
\end{figure}


\section{Conclusions}\label{sec:conclusions}

In this paper we applied a methodology from the literature~\cite{barthe2018secure} to the advanced obfuscation technique of control-flow flattening and proved that it preserves the constant-time policy.
For that, we have first defined what programs leak.
Then, we have defined the relation $\confrel{}{p}{}$ between source and target configurations -- that roughly relates configurations with the same behavior -- and proved that it adheres to the definition of general simulation.
Finally, we proved that the obfuscation preserves constant time by showing that the pair $(\eqc, \eqc)$ is a general CT-simulation, as required by the framework we instantiated.
As a consequence, the obfuscation based on control-flow flattening is proved to preserve the constant-time policy.

Future work will address proving the security of other obfuscations techniques, and considering other security properties, e.g., general safeties or hyper-safeties.
Here we just considered a passive attacker that can only observe the leakage, and an interesting problem would be to explore if our result and the current proof technique scale to a setting with active attackers that also interferes with the execution of programs.
Indeed, recently new secure compilation principles have been proposed to take active attackers into account~\cite{abate2018exploring}.

\paragraph{Related Work}

Program obfuscations are widespread code
transformations~\cite{laszlo2009obfuscating,collberg2010surreptitious,junod2015obfuscator,tigress,uglifyjs2,binaryen}
designed to protect software in settings where the adversary has physical
access to the program and can compromise it by inspection or tampering.
A great deal of work has been done on obfuscations that are resistant
against reverse engineering making the life of attackers harder.
However, we do not discuss these papers because they do not
consider formal properties of the proposed transformations.
We refer the interested reader to~\cite{hosseinzadeh2018} for a recent survey.

Since to the best our knowledge, ours is the first work addressing the problem
of security preservation, here we focus only on those proposals that formally
studied the correctness of obfuscations.
%
In~\cite{dallapreda2005control, dallapreda2009semantics} a formal framework
based on abstract interpretation is proposed to study the effectiveness of
obfuscating techniques.
This framework not only characterizes when a transformation is correct but also
measures its resilience, i.e., the difficulty of undoing the obfuscation.
More recently, other work went in the direction of fully verified, obfuscating
compilation chains~\cite{blazy2012towards, blazy2016formal,blazy2019formal}.
Among these~\cite{blazy2016formal} is the most similar to ours, but it only focusses
on the correctness of the transformation, and studies it in the setting of the
CompCert C compiler.
Differently, here we adopted a more foundational approach by considering a core
imperative language and proved that the considered transformation preserves security.

As for secure compilation, we can essentially distinguish two different approaches.
The first one only considers passive attackers (as we do) that do not interact
with the program but that try to extract confidential data by observing its behaviour.
Besides~\cite{barthe2018secure}, recently there has been an increasing interest in
preserving the verification of the constant time policy, e.g.,
a version of the CompCert C compiler~\cite{barthe2020formal} has been
released that guarantees that preservation of the policy in each compilation step.
The second approach in secure compilation considers active attackers that are
modeled as contexts in which a program is plugged in.
Traditionally, this approach reduces proving the security preservation to
proving that the compiler is fully-abstract~\cite{patrignani2019formal}.
However, recently new proof principles emerged,
see~\cite{abate2018exploring,patrignani2018robustly} for an overview.
%


\bibliographystyle{easychair}
\bibliography{biblio}

\begin{thebibliography}{10}

\bibitem{abate2018exploring}
Carmine Abate, Roberto Blanco, Deepak Garg, Catalin Hritcu, Marco Patrignani,
  and J{\'{e}}r{\'{e}}my Thibault.
\newblock Journey beyond full abstraction: Exploring robust property
  preservation for secure compilation.
\newblock In {\em 32nd {IEEE} Computer Security Foundations Symposium}, pages
  256--271, 2019.

\bibitem{barthe2020formal}
Gilles Barthe, Sandrine Blazy, Benjamin Gr{\'{e}}goire, R{\'{e}}mi Hutin,
  Vincent Laporte, David Pichardie, and Alix Trieu.
\newblock Formal verification of a constant-time preserving {C} compiler.
\newblock {\em {PACMPL}}, 4({POPL}):7:1--7:30, 2020.

\bibitem{barthe2018secure}
Gilles Barthe, Benjamin Gr{\'{e}}goire, and Vincent Laporte.
\newblock Secure compilation of side-channel countermeasures: The case of
  cryptographic "constant-time".
\newblock In {\em 31st {IEEE} Computer Security Foundations Symposium, {CSF}},
  pages 328--343, 2018.

\bibitem{uglifyjs2}
Mihai Bazon.
\newblock Uglifyjs - javascript parser, compressor, minifier written in js.
\newblock \url{http://lisperator.net/uglifyjs/}.
\newblock Online; last access Dec 2019.

\bibitem{Bernstein05}
Daniel~J. Bernstein.
\newblock {Cache-timing attacks on AES}.
\newblock \url{https://cr.yp.to/antiforgery/cachetiming-20050414.pdf}, 2005.
\newblock Online; last access Nov 2019.

\bibitem{blazy2012towards}
Sandrine Blazy and Roberto Giacobazzi.
\newblock Towards a formally verified obfuscating compiler.
\newblock In {\em SSP 2012 - 2nd ACM SIGPLAN Software Security and Protection
  Workshop}, 2012.

\bibitem{blazy2019formal}
Sandrine Blazy and R{\'{e}}mi Hutin.
\newblock Formal verification of a program obfuscation based on mixed
  boolean-arithmetic expressions.
\newblock In {\em Proceedings of the 8th {ACM} {SIGPLAN} International
  Conference on Certified Programs and Proofs}, pages 196--208, 2019.

\bibitem{blazy2016formal}
Sandrine Blazy and Alix Trieu.
\newblock Formal verification of control-flow graph flattening.
\newblock In {\em Proceedings of the 5th {ACM} {SIGPLAN} Conference on
  Certified Programs and Proofs}, pages 176--187, 2016.

\bibitem{busi2019brief}
Matteo Busi and Letterio Galletta.
\newblock A brief tour of formally secure compilation.
\newblock In Pierpaolo Degano and Roberto Zunino, editors, {\em Proceedings of
  the Third Italian Conference on Cyber Security, {ITASEC19}}, volume 2315 of
  {\em {CEUR} Workshop Proceedings}. CEUR-WS.org, 2019.

\bibitem{tigress}
Christian Collberg.
\newblock The tigress c diversifier/obfuscator.
\newblock \url{http://tigress.cs.arizona.edu/}.
\newblock Online; last access Dec 2019.

\bibitem{collberg2010surreptitious}
Christian~S. Collberg and Jasvir Nagra.
\newblock {\em Surreptitious Software - Obfuscation, Watermarking, and
  Tamperproofing for Software Protection}.
\newblock Addison-Wesley, 2010.

\bibitem{GM82}
J.~A. {Goguen} and J.~{Meseguer}.
\newblock Security policies and security models.
\newblock In {\em IEEE Symposium on Security and Privacy}, pages 11--20, 1982.

\bibitem{hosseinzadeh2018}
Shohreh Hosseinzadeh, Sampsa Rauti, Samuel Laur{\'e}n, Jari-Matti
  M{\"a}kel{\"a}, Johannes Holvitie, Sami Hyrynsalmi, and Ville Lepp{\"a}nen.
\newblock Diversification and obfuscation techniques for software security: A
  systematic literature review.
\newblock {\em Information and Software Technology}, 104:72--93, 2018.

\bibitem{sigplanblog}
Catalin Hritcu, David Chisnall, Deepak Garg, and Mathias Payer.
\newblock Secure compilation.
\newblock \url{https://blog.sigplan.org/2019/07/01/secure-compilation/}, 2019.
\newblock Online; last access Dec 2019.

\bibitem{junod2015obfuscator}
Pascal Junod, Julien Rinaldini, Johan Wehrli, and Julie Michielin.
\newblock Obfuscator-llvm--software protection for the masses.
\newblock In {\em 2015 IEEE/ACM 1st International Workshop on Software
  Protection}, pages 3--9. IEEE, 2015.

\bibitem{Kocher96}
Paul~C. Kocher.
\newblock {Timing Attacks on Implementations of Diffie-Hellman, RSA, DSS, and
  Other Systems}.
\newblock In {\em Proceedings of the 16th Annual International Cryptology
  Conference on Advances in Cryptology}, volume 1109 of {\em LNCS}, pages
  104--113. Springer-Verlag, 1996.

\bibitem{laszlo2009obfuscating}
T{\i}mea L{\'a}szl{\'o} and {\'A}kos Kiss.
\newblock Obfuscating c++ programs via control flow flattening.
\newblock {\em Annales Universitatis Scientarum Budapestinensis de Rolando
  E{\"o}tv{\"o}s Nominatae, Sectio Computatorica}, 30(1):3--19, 2009.

\bibitem{MolnarPSW05}
David Molnar, Matt Piotrowski, David Schultz, and David~A. Wagner.
\newblock The program counter security model: Automatic detection and removal
  of control-flow side channel attacks.
\newblock In Dongho Won and Seungjoo Kim, editors, {\em Information Security
  and Cryptology - {ICISC} 2005, 8th International Conference}, volume 3935 of
  {\em LNCS}, pages 156--168, 2005.

\bibitem{nielson2007semantics}
Hanne~Riis Nielson and Flemming Nielson.
\newblock {\em Semantics with Applications: An Appetizer}.
\newblock Undergraduate Topics in Computer Science. Springer, 2007.

\bibitem{patrignani2019formal}
Marco Patrignani, Amal Ahmed, and Dave Clarke.
\newblock Formal approaches to secure compilation: A survey of fully abstract
  compilation and related work.
\newblock {\em {ACM} Computing Surveys}, 2019.

\bibitem{patrignani2018robustly}
Marco Patrignani and Deepak Garg.
\newblock Robustly safe compilation or, efficient, provably secure compilation.
\newblock {\em CoRR}, abs/1804.00489, 2018.

\bibitem{dallapreda2005control}
Mila~Dalla Preda and Roberto Giacobazzi.
\newblock Control code obfuscation by abstract interpretation.
\newblock In {\em Third {IEEE} International Conference on Software Engineering
  and Formal Methods {(SEFM} 2005), 7-9 September 2005, Koblenz, Germany},
  pages 301--310, 2005.

\bibitem{dallapreda2009semantics}
Mila~Dalla Preda and Roberto Giacobazzi.
\newblock Semantics-based code obfuscation by abstract interpretation.
\newblock {\em Journal of Computer Security}, 17(6):855--908, 2009.

\bibitem{binaryen}
WebAssembly team.
\newblock Binaryen - compiler infrastructure and toolchain library for
  webassembly.
\newblock \url{https://github.com/WebAssembly/binaryen}.
\newblock Online; last access Dec 2019.

\end{thebibliography}

\appendix
\section{Proof}\label{sec:proof}

Here we report a proof sketch with that includes the most significant cases.



Before proving the correctness of the obfuscation, we prove the following lemma that relates the termination of the source program with the assignment to the special variable $\wvar{pc}$ causing the termination of the obfuscated version.
(As above, the $n$-th element of the list $\mtt{ls}$ generated by the obfuscation is referred to as $\mtt{ls}[n]$.)
\begin{lemma}\label{lemma:skip-implies-pc0}
    Let $p$ be a program with $\size{p} = n$, and $\mtt{ls} = \cmdlbl{\wvar{pc}}{p}{1}{0}$.
    \\
    If $\stepn{p, \sigma_\mi{init}}{}{*}{\step{c, \sigma}{}{\wskip, \sigma'}}$ then $\mtt{ls}[n] = (n, c \wseq \wvar{pc} \wassign 0)$.
\end{lemma}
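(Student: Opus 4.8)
The plan is to prove a slightly generalised statement by structural induction on $p$, because the top-level call $\cmdlbl{\wvar{pc}}{p}{1}{0}$ only exposes the offset $1$ and the continuation $0$, whereas the recursive clauses of $\cmdlbl{\cdot}{\cdot}{\cdot}{\cdot}$ use arbitrary offsets $n_0$ and continuations $m$. So first I would establish, by an easy induction on $c$ using the clauses of $\size{\cdot}$, the \emph{layout invariant}: the list $\cmdlbl{\wvar{pc}}{c}{n_0}{m}$ carries exactly one case for each index in $\{n_0,\dots,n_0+\size{c}-1\}$. In particular the maximal index of $\mtt{ls}=\cmdlbl{\wvar{pc}}{p}{1}{0}$ is $\size{p}=n$, so $\mtt{ls}[n]$ is well defined and is the unique last case of the list; this is the object the statement refers to.

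I would then prove, again by structural induction on $p$ with $n_0,m$ universally quantified, that whenever $p,\sigma \to^* c,\sigma'' \to \wskip,\sigma'$ the case associated with the last executed command $c$ has the form $(k, c \wseq \wvar{pc}\wassign m)$ for the appropriate index $k$; the lemma is the instance $n_0=1$, $m=0$ (with $k=n$). The base cases are immediate: for $p=\wvar{x}\wassign e$ the only step is $p\to\wskip$, so $c=p$ and the single case is $(n_0,\wvar{x}\wassign e \wseq \wvar{pc}\wassign m)$; the command $\wskip$ does not step on its own and only arises as a continuation, so it is handled by the surrounding context.

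The inductive cases carry the real content. For $p=c_1\wseq c_2$ the semantics first drives $c_1$ to $\wskip$ and only then runs $c_2$ with the outer continuation $m$, so the last executed command lies inside $c_2$; I would apply the induction hypothesis to $c_2$ with offset $n_0+\size{c_1}$ and the same $m$, and conclude by the definitional split of $\cmdlbl{\cdot}{\cdot}{\cdot}{\cdot}$ on sequences. For $p=\wif b \wthen c_1 \welse c_2$ the guard step selects a branch, each compiled with continuation $m$, so I would invoke the induction hypothesis on the taken branch. For $p=\wwhile b \wdo c$ the exit step is the reduction of the unrolled conditional $\wif b \wthen (c \wseq \wwhile b \wdo c) \welse \wskip$ with a false guard to $\wskip$; on the target side this is matched by the guard case routing $\wvar{pc}$ to the dedicated exit case $(n_0+1+\size{c}, \wskip \wseq \wvar{pc}\wassign m)$, which by the layout invariant sits at the maximal index of the loop's block and carries the continuation $m$. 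The body iterations are absorbed by the fact that the body is compiled with continuation $n_0$ (looping back to the guard), so no intermediate case assigns $\wvar{pc}\wassign m$ until the guard finally fails.

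The main obstacle is the $\wwhile$ case together with reconciling the dynamically determined last command $c$ with the statically fixed index $n$. Two points need care: first, the loop exit in the source is a single step out of a \emph{freshly unrolled} conditional, whereas the obfuscation reaches $\wvar{pc}\wassign m$ through the two-step route guard-case~$\to$~exit-case, so the correspondence is not step-for-step and must be argued purely through the layout invariant and the shape of $\cmdlbl{\cdot}{\cdot}{\cdot}{\cdot}$; second, since the body's continuation points back to $n_0$ rather than forward to $m$, I must verify that the only case reachable at termination with assignment $\wvar{pc}\wassign m$ is precisely the maximal-index exit case. Once this bookkeeping is in place, the conclusion $\mtt{ls}[n]=(n, c \wseq \wvar{pc}\wassign 0)$ follows by instantiating the generalised claim at the top level.
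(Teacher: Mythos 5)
Your overall strategy --- generalise over the offset $n_0$ and the continuation $m$, establish the layout invariant, and induct on the program --- is the natural fleshing-out of the paper's one-line proof (``induction on $\size{p}$''), and the layout invariant itself is correct. But the last step of your argument, instantiating the generalised claim at the top level ``with $k=n$'', is a genuine gap: nothing in your induction forces the case associated with the dynamically last-executed command to sit at the maximal index $n=\size{p}$, and in fact it need not. Take $p = \wif b \wthen (\wvar{x}\wassign 1) \welse (\wvar{y}\wassign 2)$ with a true guard: the last executed command is $\wvar{x}\wassign 1$, whose case is $(2,\ \wvar{x}\wassign 1 \wseq \wvar{pc}\wassign 0)$, whereas $\mtt{ls}[3] = (3,\ \wvar{y}\wassign 2 \wseq \wvar{pc}\wassign 0)$, so the conclusion $\mtt{ls}[n]=(n,\ c\wseq\wvar{pc}\wassign 0)$ fails. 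Your own handling of the $\wif$ case (apply the IH to the taken branch) produces exactly this index-$2$ case, i.e.\ an index strictly below $n$ whenever the taken branch is not the last block; so, carried out correctly, your induction proves something that contradicts the instantiation you claim. What it can deliver is only the weaker fact that the case associated with the final command carries the continuation $\wvar{pc}\wassign 0$ --- which is the reading the paper actually relies on in the proof of Theorem~\ref{thm:gensim}, where the lemma is invoked with $n$ being the current value of $\wvar{pc}$, not $\size{p}$. A blind attempt should have flagged that the statement as literally written is false and isolated this corrected reading.

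There is a second gap in the two cases you yourself single out as the crux. Your generalised invariant has the form ``the case associated with the last executed command $c$ is $(k,\ c \wseq \wvar{pc}\wassign m)$'', with $c$ the literal source command; but when the last executed command is a conditional this is impossible. For $p=\wwhile b \wdo c$ that terminates, the final source step reduces the unrolled conditional $\wif b \wthen (c \wseq \wwhile b \wdo c) \welse \wskip$ to $\wskip$, and no case of $\mtt{ls}$ contains that command at all: the exit is matched on the target by the guard case (which is of the form $\wif b \wthen \wvar{pc}\wassign\cdot \welse \wvar{pc}\wassign\cdot$, not $c\wseq\wvar{pc}\wassign m$) followed by the separate exit case $(n_0+1+\size{c},\ \wskip\wseq\wvar{pc}\wassign m)$. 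You correctly observe this two-step route, but you never reconcile it with the stated shape of the invariant, so the $\wwhile$ case of your induction cannot close as formulated; the same mismatch occurs for an $\wif$ whose taken branch is $\wskip$. Relatedly, structural induction on $p$ (and equally the paper's induction on $\size{p}$) is not well-founded across loop unrolling, since the residual conditional is a strictly larger command than $\wwhile b \wdo c$; a robust proof should instead induct on the length of the execution $p,\sigma_{\mi{init}} \to^* c,\sigma$, maintaining an invariant tying each residual command to the case currently designated by $\wvar{pc}$ --- which is, in essence, the relation $\confrel{}{p}{}$ of Figure~\ref{fig:simobf} rather than a statement about $\mtt{ls}$ alone.
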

\begin{proof}
    Easily proved by induction on $\size{p}$.
\end{proof}

The binary relation $\confrel{}{p}{}$ in Figure~\ref{fig:simobf} is a general simulation according to Definition~\ref{def:manystepssim}.
Before showing that, we first give the definition of $\ns{A}{\alpha}$, that maps a source $\src{A}$ and a target configuration $\alpha$ into a natural number expressing the number of steps that need to be performed on $\alpha$ to reach a configuration which is in relation $\confrel{}{p}{}$ with the one reached from $\src{A}$ in one step.~\cite{barthe2018secure}
In our case the definition is syntax directed and is as follows: 
\begin{align*}
    \ns{A}{\alpha} \triangleq
        \begin{cases}
            0 & \text{if } A.\mi{cmd} \in \{ \wskip \wseq \cdot, \wwhile \cdot \wdo \cdot \} \\
            \ns{(c_1, A.\sigma)}{\alpha} & \text{if } A.\mi{cmd} = c_1 \wseq c_2 \land c_1 \neq \wskip\\
            9 & \text{if } A.\mi{cmd} \in \{ \wif \cdot \wthen \cdot \welse \}\\
            8 & \text{o.w.}
        \end{cases}
\end{align*}
We also define the measure $\sz{\cdot}$ that it used to guarantee that an infinite number of steps at the source level is not matched by a finite number of steps at the target~\cite{barthe2018secure,blazy2016formal}: 
\begin{align*}
    | A | \triangleq
        \begin{cases}
            2\cdot| (c, A.\sigma) | + 3 & \text{ if } A.\mi{cmd} = \wwhile b \wdo c\\
            | (c_2, A.\sigma) | + 1 & \text{ if } A.\mi{cmd} = \wskip \wseq c_2\\
            0 & \text{ o.w.}
        \end{cases}
\end{align*}
Note that the measure $\sz{\cdot}$ was built with the specific requirements of the proof of Theorem~\ref{thm:gensim}, i.e., to ensure that $(\forall B, \alpha.\ \step{A}{}{B} \land \confrel{A}{p}{\alpha} \land \ns{A}{\alpha} = 0 \implies \sz{B} < \sz{A})$.

\thmgensim*

\begin{proof} (Sketch)

    \begin{itemize}
        \item \underline{$(\forall B, \alpha.\ \step{A}{}{B} \land \confrel{A}{p}{\alpha} \implies (\exists \beta.\ \stepn{\alpha}{}{\ns{A}{\alpha}}{\beta} \implies \confrel{B}{p}{\beta})$.}
        This proof goes by induction on the rules of the operational semantics.
        We only consider the most interesting cases, the others being similar.
        Actually, we consider two base cases and the only inductive one.

        Also, note that -- by definition of $\confrel{}{p}{}$ -- any configuration $\alpha$ related with another $\src{A}$ must be such that $\alpha.\sigma = A.\sigma \cup \{ \wvar{pc} \mapsto n \}$ and
        \begin{align*}
            \alpha.\mi{cmd} = &\wwhile 1 \wleq \wvar{pc} \wdo \\
                     &\qquad \wswitch \mtt{pc} : \mtt{ls}
        \end{align*}
        for some $n$ and $\mtt{ls} = \cmdlbl{\wvar{pc}}{p}{1}{0}$.

        \begin{description}
            \item[Case: $A.\mi{cmd} = \wvar{x} \wassign e$.]
            By definition of $\rightarrow$ we know that:
            \[
                \step{A}{}{b = \wskip, \sigma[\wvar{x} \mapsto \sem{e}]}.
            \]

            We have two exhaustive cases, depending on $\mi{color}(A.\mi{cmd})$:
            \begin{enumerate}
                \item \textbf{Case $\mi{color}(A.\mi{cmd}) = \mi{white}$.}
                    By Rule~\rulename{(White)} we know that $\cmdrelc{-, \wvar{pc}}{\wvar{x} \wassign e}{\mtt{ls}[n]}$.

                    By Lemma~\ref{lemma:skip-implies-pc0} and definition of $\bowtie$, we know that
                    \mbox{$\mtt{ls}[n] = (n, \wvar{x} \wassign e \wseq \wvar{pc} \wassign 0)$}
                    and $\stepn{\alpha}{}{\ns{A}{\alpha}}{\beta = \wskip, \sigma[\wvar{x} \mapsto \sem{e}, \wvar{pc} \mapsto 0]}$. 
                    The thesis then follows by definition of $\confrel{}{p}{}$.
                \item \textbf{Case $\mi{color}(A.\mi{cmd}) \neq \mi{white}$.}
                    Similarly to the case above, by Rule~\rulename{(Colored)} we have $\cmdrelc{-,\wvar{pc}}{\wwhile b \wdo c''}{\mtt{ls}[n_0]}$ and $\cmdrelw{n_0, \wvar{pc}}{\wvar{x} \wassign e}{\mtt{ls}[n]}$.

                    By Lemma~\ref{lemma:skip-implies-pc0} and definition of $\diamond$, we know that $\mtt{ls}[n] = (n, \wvar{x} \wassign e \wseq \wvar{pc} \wassign 0)$, thus $\stepn{\alpha}{}{\ns{A}{\alpha}}{\beta = \wskip, \sigma[\wvar{x} \mapsto \sem{e}, \wvar{pc} \mapsto 0]}$. 
                    The thesis then follows by definition of $\confrel{}{p}{}$.
            \end{enumerate}
            \item[Case: $A.\mi{cmd} = \wwhile b \wdo c$.]
            By definition of $\rightarrow$ we know that:
            \[
                \step{A}{}{B = \wif b \wthen (c \wseq \wwhile b \wdo c) \welse \wskip, \sigma[\wvar{x} \mapsto \sem{e}]}.
            \]

            Again, we have two exhaustive cases, depending on $\mi{color}(A.\mi{cmd})$:
            \begin{enumerate}
                \item \textbf{Case $\mi{color}(A.\mi{cmd}) = \mi{white}$.}
                    By Rule~\rulename{(White)} of $\confrel{}{p}{}$ we know that $\cmdrelc{-, \wvar{pc}}{\wwhile b \wdo c}{\mtt{ls}[n], m}$, i.e. that $(\star)\  \cmdrelw{n, \wvar{pc}}{\wif b \wthen (c \wseq \wwhile b \wdo c) \welse \wskip}{\mtt{ls}[n], m}$.

                    Since $\ns{A}{\alpha} = 0$, $\beta = \alpha$.
                    We must then show that $\confrel{b}{p}{\beta}$ and, given that $\mi{color}(\beta.\mi{cmd}) \neq \mi{white}$ since it derivates from $\wwhile b \wdo c$, it suffices to show the following facts
                    \begin{itemize}
                        \item $\beta.\sigma = B.\sigma \cup \{ \wvar{pc} \mapsto n \}$ and $\beta.\mi{cmd} = \wwhile 1 \wleq \wvar{pc} \wdo (\wswitch \wvar{pc} : \mtt{ls})$ with $\mtt{ls} = \cmdlbl{\wvar{pc}}{p}{1}{0}$ that directly follows from $\beta = \alpha$;
                        \item $\cmdrelc{-, \wvar{pc}}{\wwhile b \wdo c}{\mtt{ls}[n_0], m'}$.
The thesis follows from $\star$ and by definition of $\diamond$ since $n_0 = n$, because $\alpha.\sigma(\wvar{pc}) = n$, and by choosing $m' = m$;

                        \item $\cmdrelw{n_0, \wvar{pc}}{\wif b \wthen (c \wseq \wwhile b \wdo c) \welse \wskip}{\mtt{ls}[n], m}$ directly follows from the hypotheses.
                    \end{itemize}

                \item \textbf{Case $\mi{color}(A.\mi{cmd}) \neq \mi{white}$.}
                    Analogous to the above.
            \end{enumerate}

            \item[Case: $A.\mi{cmd} = c_1 \wseq c_2$, $c_1 \neq \wskip$.]
                The induction hypothesis (IHP) reads as follows
                \begin{align*}
                    \forall \alpha'.\ \step{c_1, \sigma}{}{c'_1, \sigma'} \land \confrel{c_1, \sigma}{p}{\alpha'}
                        \Rightarrow (\exists \beta'.\ \stepn{\alpha'}{}{\ns{(c_1, \sigma)}{\alpha'}}{\beta'} \Rightarrow \confrel{c'_1, \sigma'}{p}{\beta'})
                \end{align*}
                    and we have to prove that
                \begin{align*}
                    \forall \alpha.\ \step{c_1 \wseq c_2, \sigma}{}{c'_1 \wseq c_2, \sigma'} &\land \confrel{c_1 \wseq c_2, \sigma}{p}{\alpha} \\
                        &\Rightarrow (\exists \beta.\ \stepn{\alpha}{}{\ns{(c_1 \wseq c_2, \sigma)}{\alpha}}{\beta} \Rightarrow \confrel{c'_1 \wseq c_2, \sigma}{p}{\beta}).
                \end{align*}
 \newpage
                Again, we have two exhaustive cases, depending on $\mi{color}(A.\mi{cmd})$:
                \begin{enumerate}

                    \item \textbf{Case $\mi{color}(A.\mi{cmd}) \neq \mi{white}$.}
                        Note that it must be $\alpha = \alpha'$ since they coincide both on commands (by definition of $\confrel{}{p}{}$) and on the store.
                        Also, by the premises of Rule~\rulename{(Colored)} we have $\cmdrelc{-, \wvar{pc}}{\wwhile b \wdo c''}{\mtt{ls}[n_0], m'}$ and $\cmdrelw{n_0, \wvar{pc}}{c_1 \wseq c_2}{\mtt{ls}[n], m}$. 
                        Since $\ns{(c_1 \wseq c_2, \sigma)}{\alpha} = \ns{(c_1, \sigma)}{\alpha'}$, the operational semantics is deterministic and $\alpha = \alpha'$, we have that  $\beta = \beta'$.
                        So, since $\mi{color}(c'_1) \neq \mi{white}$, to prove that $\confrel{c'_1 \wseq c_2, \sigma}{p}{\beta}$, it remains to prove the following:
                        \begin{itemize}
                            \item $\cmdrelc{-, \wvar{pc}}{\wwhile b \wdo c''}{\mtt{ls}[n_0], m''}$ holds by hypothesis with $m'' = m'$;
                            \item $\cmdrelw{n_0, \wvar{pc}}{c'_1 \wseq c_2}{\mtt{ls}[n'], m}$ that  follows by (IHP) that guarantees that $\cmdrelw{n_0, \wvar{pc}}{c'_1}{\mtt{ls}[n_1], m_1}$ and by the condition $ \confrel{c_1 \wseq c_2, \sigma}{p}{\alpha}$ that ensures $\cmdrelw{n_0, \wvar{pc}}{c_2}{\mtt{ls}[m_1], m}$.
                        \end{itemize}

                        \item \textbf{Case $\mi{color}(A.\mi{cmd}) = \mi{white}$.}
                        Analogous to the case above.
                \end{enumerate}
        \end{description}

        \item \underline{$(\forall B, \alpha.\ \step{A}{}{B} \land \confrel{A}{p}{\alpha} \land \ns{A}{\alpha} = 0 \implies \sz{B} < \sz{A})$.}
        By construction of the measure function $\sz{\cdot}$.
        \item \underline{For any final source configuration $\src{A}$ and obfuscated configuration $\alpha$ there exists a final} \\
        \underline{ obfuscated configuration $\beta$ such that $\stepn{\alpha}{}{\ns{A}{\alpha}}{\beta} \implies \confrel{A}{p}{\beta}$.}
        Trivial.
    \end{itemize}
\end{proof}

We then show that the control-flow flattening obfuscation preserves the constant-time programming policy.
Following~\cite{barthe2018secure}, we show that $(\eqc, \eqc)$ is a \emph{final CT-diagram} w.r.t.\ $\confrel{}{p}{}$, $\ns{\cdot}{\cdot}$ and $\sz{\cdot}$.

To prove that $\eqc$ adheres to the definitions above we need two lemmata:
\begin{lemma}\label{lemma:eqc-t-t-implies-eqc}
    Let $A, A', B, B'$ be source or target configurations.
    If $A \eqc A'$, $\step{A}{t}{B}$ and $\step{A'}{t}{B'}$ then $B \eqc B'$.
\end{lemma}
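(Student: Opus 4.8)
The plan is to prove the statement by induction on the derivation of $\step{A}{t}{B}$ (equivalently, by case analysis on the shared command, since $A \eqc A'$ means $A.\mi{cmd} = A'.\mi{cmd}$ and the rules of Figure~\ref{fig:cmdsem} are syntax-directed on the command). The enabling observation I would isolate first is that expression leakage is \emph{state-independent}: a glance at the definition of $\leak{\cdot}{\cdot}$ shows that constants and identifiers always leak $[\eleak]$ and each operator contributes only its fixed observable, so $\leak{e}{\sigma} = \leak{e}{\sigma'}$ for every $e$ and all stores $\sigma, \sigma'$. Because $A.\mi{cmd} = A'.\mi{cmd}$, every expression occurring in the fired rule is syntactically the same in the two derivations, so the structural part of the two leakages always coincides, and the hypothesis $t = t'$ can only constrain the value-dependent markers.

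For the deterministic rules this is immediate. If $A.\mi{cmd}$ is $\wvar{x} \wassign e$, $\wskip \wseq c_2$, or $\wwhile b \wdo c$, the successor command is a fixed function of $A.\mi{cmd}$ alone ($\wskip$, $c_2$, and the unfolded conditional respectively), so $B.\mi{cmd} = B'.\mi{cmd}$ with no use of the stores. The only genuinely inductive case is $c_1 \wseq c_2$ with $c_1 \neq \wskip$: here $B.\mi{cmd} = c_1' \wseq c_2$ and $B'.\mi{cmd} = c_1'' \wseq c_2$, and I would apply the induction hypothesis to the premise sub-derivations $\step{c_1, \cdot}{t}{c_1', \cdot}$ and $\step{c_1, \cdot}{t}{c_1'', \cdot}$ — which carry the same command $c_1$ and the same leakage $t$ — to obtain $c_1' = c_1''$, hence $B \eqc B'$.

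The branching cases are where the argument must actually use $t = t'$. For $\wif b \wthen c_1 \welse c_2$ the two applicable rules emit $\leak{b}{\sigma}\wlc[\mi{true}]$ and $\leak{b}{\sigma}\wlc[\mi{false}]$; the structural prefix $\leak{b}{\cdot}$ is common by the observation above, so equality of the full leakages forces equality of the terminal markers, i.e.\ both executions take the same branch and $B.\mi{cmd} = B'.\mi{cmd}$.

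The main obstacle is the target $\wswitch e : \mtt{cs}$. Its transition leaks only $\leak{e}{\sigma}$, which is state-independent and hence identical in the two configurations, yet the successor command is the case selected by $\sem{e}$; so equal leakage alone does \emph{not} pin down the dispatch — two stores disagreeing on $\trg{\wvar{pc}}$ would select different cases, and $\eqc$ places no constraint on stores. I would resolve this by restricting attention to the configurations to which the lemma is actually applied, namely those met along obfuscated runs related to the source by $\confrel{}{p}{}$: there the value of $\trg{\wvar{pc}}$ is fixed by the (shared) source command through the $\cmdrel$ relations of Figure~\ref{fig:simobf}, so $\sem{e}$ coincides in the two configurations and the same case is chosen. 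Making this synchronisation of $\trg{\wvar{pc}}$ precise — and arguing that it is all one ever needs — is the delicate step; everything else reduces to the state-independence of leakage together with the one-line induction on sequencing.
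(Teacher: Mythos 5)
Your case analysis is, in substance, the paper's own proof carried out honestly: the paper disposes of this lemma in one line (``follows directly by case analysis on $A.\mi{cmd} = A'.\mi{cmd}$''), and your observations --- that $\leak{\cdot}{\cdot}$ is state-independent, that the successors of assignment, $\wskip \wseq c_2$ and $\wwhile$ are functions of the command alone, that sequencing needs the one-line induction, and that $t = t'$ is what forces equal $\mi{true}/\mi{false}$ markers in the $\wif$ case --- are exactly what that one line hides. The real content of your proposal is the last paragraph, and you should treat what you found there not as a delicate step left unpolished but as a counterexample to the lemma as stated. Take $\trg{\mtt{cs}}$ with the two cases $(1, \wvar{x} \wassign 1 \wseq \wvar{pc} \wassign 0)$ and $(2, \wvar{y} \wassign 2 \wseq \wvar{pc} \wassign 0)$, and stores $\sigma, \sigma'$ with $\sigma(\wvar{pc}) = 1$ and $\sigma'(\wvar{pc}) = 2$: the two configurations with command $\wswitch \wvar{pc} : \mtt{cs}$ are $\eqc$-related by Definition~\ref{def:eqc}, both steps leak $\leak{\wvar{pc}}{\cdot} = [\eleak]$, yet the successor commands differ, so the conclusion fails. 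Since the lemma quantifies over arbitrary target configurations, no proof of the statement as written can close this case; the paper's one-liner silently skips it.

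Your proposed repair, however, does not work as described. The value of $\trg{\wvar{pc}}$ is \emph{not} fixed by the shared source command through the $\cmdrel$ relations: the same command may occur at several program points, so $\cmdrel{n_0, \trg{\wvar{pc}}}{c}{\trg{\mtt{ls}}[n], m}$ can hold for distinct $n$ (obfuscating $\wif b \wthen \wvar{x} \wassign 1 \welse \wvar{x} \wassign 1$ produces two identical cases under different labels), and $\eqc$ is blind to positions. The synchronization you need actually comes from the context in which the lemma is invoked (Theorem~\ref{thm:ctsim}): equal leakage on the source side forces the two source runs through the same branches, hence the same program points, hence --- via $\confrel{}{p}{}$ --- equal values of $\wvar{pc}$ in the related target stores. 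Two honest ways to finish are: (i) strengthen Definition~\ref{def:eqc} so that target configurations are equivalent only if they also agree on $\wvar{pc}$, and then check that this stronger relation is established at initial configurations and preserved through the diagrams --- your syntactic argument then settles the $\wswitch$ case, since equal $\wvar{pc}$ forces equal dispatch; or (ii) repair the leakage model so that $\wswitch$ leaks the selected case (e.g.\ leaks $[\sem{e}]$), which is precisely what its advertised reading as sugar for nested conditionals would leak through the $\mi{true}/\mi{false}$ markers --- after which the lemma is true as stated and your argument covers every case. Either way, your diagnosis of where the difficulty sits is correct, and it is a point neither the paper's statement nor its proof addresses.
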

\begin{proof}
    Follows directly by case analysis on $A.\mi{cmd} = A'.\mi{cmd}$ (equality follows from Definition~\ref{def:eqc}).
\end{proof}
\begin{lemma}\label{lemma:eqc-implies-eqns}
    If $A \eqc A'$, $\alpha \eqc \alpha'$ then $\ns{A}{\alpha} = \ns{A'}{\alpha'}$.
\end{lemma}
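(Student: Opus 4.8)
The plan is to show that the quantity $\ns{A}{\alpha}$ depends \emph{only} on the command component $A.\mi{cmd}$ of the source configuration, and on nothing else --- neither the source store $A.\sigma$ nor the target configuration $\alpha$ enters its value. Once this is established the lemma is immediate: the hypothesis $A \eqc A'$ gives $A.\mi{cmd} = A'.\mi{cmd}$ by Definition~\ref{def:eqc}, and hence $\ns{A}{\alpha} = \ns{A'}{\alpha'}$ irrespective of the stores and of $\alpha,\alpha'$; in particular the hypothesis $\alpha \eqc \alpha'$ turns out not to be needed at all.

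To make the dependence precise I would prove the auxiliary claim: \emph{for all source configurations $X,Y$ and target configurations $\beta,\gamma$, if $X.\mi{cmd} = Y.\mi{cmd}$ then $\ns{X}{\beta} = \ns{Y}{\gamma}$.} The argument is by induction on the structure of the command $X.\mi{cmd} = Y.\mi{cmd}$ (equivalently, by well-founded induction following the recursive calls of $\ns{\cdot}{\cdot}$, which terminate since the only recursive call descends into a strict subterm). In each of the non-recursive clauses of the definition --- $\wskip \wseq \cdot$ and $\wwhile \cdot \wdo \cdot$ returning $0$, $\wif \cdot \wthen \cdot \welse \cdot$ returning $9$, and the catch-all returning $8$ --- the result is a constant fixed by the shape of the command, so both sides coincide. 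The only clause invoking the induction hypothesis is the sequence case $X.\mi{cmd} = c_1 \wseq c_2$ with $c_1 \neq \wskip$: here $\ns{X}{\beta} = \ns{(c_1, X.\sigma)}{\beta}$ and $\ns{Y}{\gamma} = \ns{(c_1, Y.\sigma)}{\gamma}$, and since the two recursive calls are on configurations whose common command is the strict subterm $c_1$, the induction hypothesis (stated generally enough to allow the possibly distinct stores $X.\sigma$, $Y.\sigma$ and targets $\beta,\gamma$) yields equality. The crucial observation is that, although the recursion threads the store through unchanged, no base clause ever inspects it, so the two branches cannot diverge.

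I do not expect any genuine obstacle: the statement is essentially the syntactic fact that $\ns{\cdot}{\cdot}$ factors through $(\cdot).\mi{cmd}$. The only points demanding a little care are noting that the induction is well-founded (the sole recursive call strictly shrinks the command by passing to its left operand) and making explicit that the store carried along the recursion is inert. With that remark in place, the lemma follows in one line from Definition~\ref{def:eqc}, and together with Lemma~\ref{lemma:eqc-t-t-implies-eqc} it supplies exactly the stability properties of $\eqc$ required to verify the manysteps and final CT-diagram conditions.
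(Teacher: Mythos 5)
Your proof is correct and takes essentially the same approach as the paper: the paper's own (one-line) proof also rests on the observation that $\ns{\cdot}{\cdot}$, as defined explicitly in the proof of Theorem~\ref{thm:gensim}, is determined purely by the syntax of the configurations, so that $\eqc$ (syntactic equality of commands, Definition~\ref{def:eqc}) forces equal values. Your explicit structural induction, and your sharper remarks that the store is inert and that the hypothesis $\alpha \eqc \alpha'$ is in fact unnecessary, merely spell out details the paper leaves implicit.
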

\begin{proof}
    This lemma follows from the fact that the function $\ns{\cdot}{\cdot}$ is defined explicitly in the proof of Theorem~\ref{thm:gensim} and depends just on the syntax of $\src{A}$, $A'$, $\alpha$, and $\alpha'$.
\end{proof}

Finally, we can show the following theorem:
\thmctsim*
\begin{proof} (Sketch)
    \begin{enumerate}
        \item \underline{$(\eqc, \eqc)$ is a \emph{manysteps CT-diagram}.}
        The definition of \emph{manysteps CT-diagrams} configurations that, if
        \begin{itemize}
            \item $A \eqc_s A'$ and $\alpha \eqc_t \alpha'$;
            \item $\step{A}{t}{B}$ and $\step{A'}{t}{B'}$;
            \item $\stepn{\alpha}{\tau}{\ns{A}{\alpha}}{\beta}$ and $\stepn{\alpha'}{\tau'}{\ns{A'}{\alpha'}}{\beta'}$;
            \item $\confrel{A}{p}{\alpha}$, $\confrel{A'}{p}{\alpha'}$, $\confrel{B}{p}{\beta}$ and  $\confrel{B'}{p}{\beta'}$
        \end{itemize}
        then
        \begin{itemize}
            \item $\tau = \tau'$ and $\ns{A}{\alpha} = \ns{A'}{\alpha'}$;
            \item $B \eqc B'$ and $\beta \eqc \beta'$;
        \end{itemize}
        The equality of $\tau$ and $\tau'$ directly follows from the fact that $\alpha$ and $\alpha'$ are syntactically the same by hypothesis and are the obfuscated version of two configurations that generate the same observable $t$.
        From Lemma~\ref{lemma:eqc-implies-eqns} we can derive $\ns{A}{\alpha} = \ns{A'}{\alpha'}$.
        Finally, Lemma~\ref{lemma:eqc-t-t-implies-eqc} entails the last two theses.

        \item \underline{$\forall p, \sigma, \sigma'.\ \phi((p, \sigma), (p, \sigma')) \Rightarrow (p, \sigma) \eqc (p, \sigma') \land (\comp{p}, \sigma) \eqc (\comp{p}, \sigma)$.}
        Follows from the definition of $\eqc$ that just requires syntactic equality between configurations.

        \item \underline{$A \eqc_s A' \Rightarrow A \in S_f \iff A' \in S_f$}.
        Again, follows directly from definition of $\eqc$ and of $S_f$.

        \item \underline{$(\eqc, \eqc)$ is a \emph{final CT-diagram}.}
        The definition of \emph{final CT-diagrams} configurations that, if
        \begin{itemize}
            \item $A \eqc A'$ and $\alpha \eqc \alpha'$,
            \item $\src{A}$ and $A'$ are final,
            \item $\stepn{\alpha}{\tau}{\ns{A}{\alpha}}{\beta}$ and $\stepn{\alpha'}{\tau'}{\ns{A'}{\alpha'}}{\beta'}$,
            \item $\confrel{A}{p}{\alpha}$, $\confrel{A'}{p}{\alpha'}$, $\confrel{B}{p}{\beta}$ and $\confrel{B'}{p}{\beta'}$
        \end{itemize}
        then
        \begin{itemize}
            \item $\tau = \tau'$ and $\ns{A}{\alpha} = \ns{A'}{\alpha'}$;
            \item $\beta \eqc \beta'$ and they are both final.
        \end{itemize}
        Since $\src{A}$ and $A'$ are final and $\confrel{A}{p}{\alpha}$ and $\confrel{A'}{p}{\alpha'}$, it must be that $\wvar{pc}$ is $0$ in both $\alpha$ and $\alpha'$.
        Thus, $\alpha$ and $\alpha'$ terminate with $\tau = \tau'$ that just include the check of the \wwhile condition.
        The other theses can be derived following the same proof structure as above.
    \end{enumerate}
\end{proof}

\end{document}